\DeclareMathAlphabet{\mathcal}{OMS}{cmsy}{m}{n}
\newcommand{\degree}{\ensuremath{^{\circ}}\xspace}
\newcommand{\true}{\emph{true}\xspace}
\newcommand{\false}{\emph{false}\xspace}
\newcommand{\maxtotal}{\textsc{MaxTotal}\xspace}
\newcommand{\maxmin}{\textsc{MaxMin}\xspace}
\def\comment#1{}%
\def\withcomments{%
  \newcounter{mycommentcounter}%
   \def\comment##1{\refstepcounter{mycommentcounter}%
    \ifhmode%
     \unskip%
     {\dimen1=\baselineskip \divide\dimen1 by 2 %
       \raise\dimen1\llap{\tiny
	{-\themycommentcounter-}}}\fi%
     \marginpar[{\renewcommand{\baselinestretch}{0.8}%
       \hspace*{-2em}\begin{minipage}{5em}\footnotesize%
[\themycommentcounter]:%
\raggedright ##1\end{minipage}}]{\renewcommand{\baselinestretch}{0.8}%
       \begin{minipage}{5em}\footnotesize%
[\themycommentcounter]: \raggedright%
##1\end{minipage}}}%
  }
\title{Consistent Labeling of Rotating Maps}
\author{Andreas Gemsa\thanks{Supported by the Concept for the Future of
KIT within the framework of the German Excellence Initiative.}, Martin
N{\"o}llenburg$^\star$, Ignaz Rutter
}
\authorrunning{Andreas Gemsa \and Martin N{\"o}llenburg \and Ignaz Rutter}
\date{}
 \institute{
   Institute of Theoretical Informatics, Karlsruhe Institute of
   Technology (KIT), Germany 
 }
\begin{document}
 \maketitle

\begin{abstract}
  Dynamic maps that allow continuous map rotations, e.g., on mobile
  devices, encounter new issues unseen in static map labeling
  before. We study the following dynamic map labeling problem: The
  input is a static, labeled map, i.e., a set $P$ of points in the
  plane with attached non-overlapping horizontal rectangular labels.
  The goal is to find a \emph{consistent} labeling of $P$ under
  rotation that maximizes the number of visible labels for all
  rotation angles such that the labels remain horizontal while the map
  is rotated. A labeling is consistent if a
  single \emph{active} interval of angles is selected for each label such
  that labels neither intersect each other nor occlude points in $P$ at any rotation
angle.

  We first introduce a general model for labeling rotating maps and
  derive basic geometric properties of consistent solutions. We show
  NP-completeness of the active interval maximization problem even for
  unit-square labels.  We then present a constant-factor approximation
  for this problem based on line stabbing, and refine it further into
  an efficient polynomial-time approximation scheme (EPTAS).  Finally, we extend the EPTAS to the more general setting
  of rectangular labels of bounded size and aspect ratio.
\end{abstract}

\section{Introduction}
Dynamic maps, in which the user can navigate continuously through
space, are becoming increasingly important in scientific and
commercial GIS applications as well as in personal mapping
applications. In particular GPS-equipped mobile devices offer various
new possibilities for interactive, location-aware maps. A common
principle in dynamic maps is that users can pan, rotate, and zoom the
map view. Despite the popularity of
several commercial and free applications, relatively little attention
has been paid to provably good labeling algorithms for dynamic
maps. 

Been et al.~\cite{bdy-dml-06} identified a set of consistency
desiderata for dynamic map labeling. Labels should neither ``jump''
(suddenly change position or size) nor ``pop'' (appear and disappear
more than once) during monotonous map navigation; moreover, the
labeling should be a function of the selected map viewport and not
depend on the user's navigation history. Previous work on the topic
has focused solely on supporting zooming and/or panning of the
map~\cite{bdy-dml-06,nps-dosbl-10,bnpw-oarcd-10}, whereas
consistent labeling under map rotations has not been considered prior
to this paper.

Most maps come with a natural orientation (usually the northern
direction facing upward), but applications such as car or pedestrian
navigation often rotate the map view dynamically to be always forward
facing~\cite{gns-dmcad-05}. Still, the labels must remain horizontally aligned for best
readability regardless of the actual rotation angle of the map. A
basic requirement in static and dynamic label placement is that labels
are pairwise disjoint, i.e., in general not all labels can be placed
simultaneously. For labeling point features, it is further required
that each label, usually modeled as a rectangle, touches the labeled
point on its boundary. It is often not allowed that labels occlude the
input point of another label. Figure~\ref{fig:example} shows an
example of a map that is rotated and labeled. The objective in map
labeling is usually to place as many labels as possible. Translating
this into the context of rotating maps means that, integrated over one
full rotation from $0$ to $2\pi$, we want to maximize the number of
visible labels. The consistency requirements of Been et
al.~\cite{bdy-dml-06} can immediately be applied for rotating maps.

\begin{figure}[tb]
  \centering
  \subfloat{\includegraphics[scale = .8, page=1]{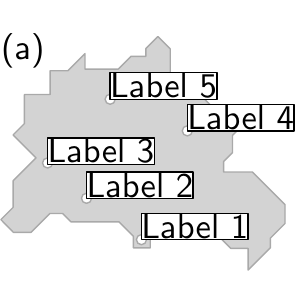}}
  \hfill
  \subfloat{\includegraphics[scale = .8, page=2]{5pts}}
  \hfill
  \subfloat{\includegraphics[scale = .8, page=3]{5pts}}
  \hfill
  \subfloat{\includegraphics[scale = .8, page=4]{5pts}}
  \vspace{-2ex}
  \caption{Input map with five points (a) and three rotated views
    with some partially occluded labels (b)--(d).}
  \label{fig:example}
  \vspace{-2ex}
\end{figure}

\paragraph{Our Results.} Initially, we define a model
for rotating maps and show some basic properties of the different
types of conflicts that may arise during rotation. Next, we prove that
consistently labeling rotating maps is NP-complete, for the
maximization of the total number of visible labels in one full
rotation and NP-hard for the maximization of the visibility range of the least
visible label. Finally, we present a new 1/4-approximation algorithm
and an efficient polynomial-time approximation scheme (EPTAS) for
unit-height rectangles. A PTAS is called \emph{efficient} if its running time is $O(f(\varepsilon) \cdot \text{poly}(n))$. Both algorithms can be extended to the case
of rectangular labels with the property that the ratio of the smallest
and largest width, the ratio of the smallest and largest height, as
well as the aspect ratio of every label is bounded by a constant, even
if we allow the anchor point of each label to be an arbitrary point of
the label. This applies to most practical scenarios where labels
typically consist of few and relatively short lines of text.

\paragraph{Related Work.}
Most previous algorithmic research efforts on automated label
placement cover \emph{static} labeling models for point, line, or area
features. For static point labeling, fixed-position models and slider
models have been introduced~\cite{fw-ppalm-91,ksw-plsl-99}, in which
the label, represented by its bounding box, needs to touch the labeled
point along its boundary. The label number maximization problem is
NP-hard even for the simplest labeling models, whereas there are
efficient algorithms for the decision problem that asks whether all points can
be labeled in some of the simpler models (see the discussion by Klau
and Mutzel~\cite{km-olpfr-03} or the comprehensive map labeling
bibliography~\cite{ws-mlb-96}). Approximation
results~\cite{ksw-plsl-99,aks-lpmis-98},
heuristics~\cite{wwks-3rsgl-01}, and exact
approaches~\cite{km-olpfr-03} are known for many variants of the
static label number maximization problem. 

In recent years, \emph{dynamic} map labeling has emerged as a new
research topic that gives rise to many unsolved algorithmic
problems. Petzold et al.~\cite{pgp-fsmld-03} used a preprocessing step
to generate a reactive conflict graph that represents possible label
overlaps for maps of all scales. For any fixed scale and map region,
their method computes a conflict-free labeling using
heuristics. Mote~\cite{m-fpflp-07} presents another fast heuristic
method for dynamic conflict resolution in label placement that does
not require preprocessing. The consistency desiderata of Been et
al.~\cite{bdy-dml-06} for dynamic labeling (no popping and jumping
effects when panning and zooming), however, are not satisfied by
either of the methods. Been et al.~\cite{bnpw-oarcd-10} showed
NP-hardness of the label number maximization problem in the consistent
labeling model and presented several approximation algorithms for the
problem. Nöllenburg et al.~\cite{nps-dosbl-10} recently studied a
dynamic version of the alternative boundary labeling model, in which
labels are placed at the sides of the map and connected to their
points by leaders. They presented an algorithm to precompute a data
structure that represents an optimal one-sided labeling for all
possible scales and thus allows continuous zooming and panning. None
of the existing dynamic map labeling approaches supports map rotation.

\section{Model}\label{sec:model}
In this section we describe a general model for rotating maps with
axis-aligned rectangular labels. Let $M$ be a labeled input map, i.e.,
a set $P=\{p_1, \dots, p_n\}$ of points  in the plane together with a
set $L = \{\ell_1, \dots, \ell_n\}$ of pairwise disjoint, closed, and axis-aligned
rectangular labels, where each point $p_i$ is a point on the
boundary $\partial \ell_i$ of its label $\ell_i$. We say $\ell_i$ is
\emph{anchored} at~$p_i$. As $M$ is rotated, each label $\ell_i$ in
$L$ remains horizontally aligned and anchored at $p_i$. Thus, label
intersections form and disappear during rotation of $M$. We take the
following alternative perspective on the rotation of $M$. Rather than
rotating the points, say clockwise, and keeping labels horizontal we
may instead rotate each label around its anchor point counterclockwise
and keep the set of points fixed. It is easy to see that both
rotations are equivalent and yield exactly the same results.

A \emph{rotation} of $L$ is defined by a rotation angle $\alpha \in
[0,2\pi)$; a \emph{rotation labeling} of $M$ is a function $\phi\colon L
\times [0,2\pi) \rightarrow \{0,1\}$ such that $\phi(\ell,\alpha) = 1$
if label $\ell$ is visible or \emph{active} in the rotation of $L$ by
$\alpha$, and $\phi(\ell,\alpha) = 0$ otherwise. We call a labeling
$\phi$ \emph{valid} if, for any rotation $\alpha$, the set of labels
$L(\alpha) = \{\ell \in L \mid \phi(\ell,\alpha)=1\}$ consists of
pairwise disjoint labels and no label in $L(\alpha)$ contains any
point in $P$ (other than its anchor point). We note that a valid
labeling is not yet consistent in terms of the definition of Been et
al.~\cite{bdy-dml-06,bnpw-oarcd-10}: given fixed anchor points,
labels clearly do not jump and the labeling is independent of the
rotation history, but labels may still \emph{pop} during a full
rotation from $0$ to $2\pi$, i.e., appear and disappear more than
once. In order to avoid popping effects, each label may be active only
in a single contiguous range of $[0,2\pi)$, where ranges are circular
ranges modulo $2\pi$ so that they may span the input rotation
$\alpha=0$. A valid labeling $\phi$, in which for every label $\ell$
the set $A_\phi(\ell) = \{\alpha \in [0,2\pi) \mid \phi(\ell,\alpha)=1\}$
is a contiguous range modulo $2\pi$, is called a \emph{consistent}
labeling. For a consistent labeling $\phi$ the set $A_\phi(\ell)$ is
called the \emph{active range} of $\ell$. The \emph{length}
$|A_\phi(\ell)|$ of an active range $A_\phi(\ell)$ is defined as the
length of the circular arc $\{(\cos \alpha, \sin \alpha) \mid \alpha
\in A_\phi(\ell)\}$ on the unit circle.

The objective in static map labeling is usually to find a maximum
subset of pairwise disjoint labels, i.e., to label as many points as
possible. Generalizing this objective to rotating maps means that
integrated over all rotations $\alpha \in [0,2\pi)$ we want to display
as many labels as possible. This corresponds to maximizing the sum
$\sum_{\ell \in L} |A_\phi(\ell)|$ over all consistent labelings
$\phi$ of $M$; we call this optimization problem \textsc{MaxTotal}. An
alternative objective is to maximize over all consistent labelings
$\phi$ the minimum length $\min_\ell |A_\phi(\ell)|$ of all active
ranges; this problem is called \textsc{MaxMin}.

\section{Properties of consistent labelings}
\label{sec:properties}
In this section we show basic properties of consistent labelings. If
two labels $\ell$ and $\ell'$ intersect in a rotation of $\alpha$ they
have a (regular) \emph{conflict} at $\alpha$, i.e., in a consistent
labeling at most one of them can be active at $\alpha$. The set
$C(\ell,\ell') = \{\alpha \in [0,2\pi) \mid \ell \text{ and } \ell'
\text{ are in conflict at } \alpha\}$ is called the \emph{conflict
  set} of $\ell$ and $\ell'$.

We show the following lemma in a more general model, in which
the anchor point $p$ of a label $\ell$ can be \emph{any} point within
$\ell$ and not necessarily a point on the boundary $\partial\ell$.

\begin{lemma}\label{lem:4regions}
  For any two labels $\ell$ and $\ell'$ with anchor points $p \in
  \ell$ and $p' \in \ell'$ the set $C(\ell,\ell')$ consists of at most
  four disjoint contiguous conflict ranges.
\end{lemma}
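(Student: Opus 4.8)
The plan is to reduce the geometric rotation to a statement about a circle meeting an axis-aligned rectangle. I first pass to the reference frame of $\ell$: in the equivalent rotation model each label is rotated about its own anchor by $\alpha$, so rotating the whole plane by $-\alpha$ about $p$ restores $\ell$ to its original horizontal placement, while $\ell'$ is subjected to the composition of the rotation by $\alpha$ about $p'$ with the rotation by $-\alpha$ about $p$. Because the two angles cancel, this composition is a pure translation; a short computation gives the translation vector $t(\alpha)=R_{-\alpha}(v)-v$, where $v=p'-p$ and $R_\beta$ denotes rotation by $\beta$ about the origin. Thus, for every $\alpha$, labels $\ell$ and $\ell'$ are in conflict if and only if the fixed horizontal rectangle $\ell$ intersects the horizontal rectangle $\ell'+t(\alpha)$ — and note that the exact position of the anchors inside the labels enters only through the single vector $v$, which is why the more general model costs nothing.

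Next I would describe the curve $\alpha\mapsto t(\alpha)$ and turn the rectangle--rectangle intersection into a membership test. As $\alpha$ runs over $[0,2\pi)$, the point $R_{-\alpha}(v)$ runs once around the circle of radius $|v|$ about the origin, so $t(\alpha)$ runs once around the circle $\mathcal C$ of radius $|v|$ about $-v$; moreover $\alpha\mapsto t(\alpha)$ is a homeomorphism onto $\mathcal C$, so arcs of $\mathcal C$ are in bijection with contiguous ranges of $\alpha$ modulo $2\pi$ (the case $v=0$, in which $\mathcal C$ degenerates to a point, is trivial). Two axis-aligned rectangles, one of them translated by $t$, intersect exactly when their $x$-extents overlap and their $y$-extents overlap; each of these two conditions restricts the corresponding coordinate of $t$ to a closed interval, so the set of admissible translations is an axis-aligned rectangle $Q$ (the Minkowski sum $\ell\oplus(-\ell')$), which does not depend on $\alpha$. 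Hence $C(\ell,\ell')$ corresponds, under the above bijection, precisely to $\mathcal C\cap Q$.

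It then remains to bound the number of arcs of $\mathcal C$ contained in $Q$. The boundary $\partial Q$ lies on four lines, a circle meets a line in at most two points, so $\mathcal C$ meets $\partial Q$ in at most eight points; in general position these are transversal crossings, at each of which $\mathcal C$ switches between the interior and the exterior of the convex set $Q$, so $\mathcal C\cap Q$ is a union of at most four arcs. Degenerate configurations — tangency to an edge, passing through a corner, $\mathcal C\subseteq Q$, or $v=0$ — only lower this count, and one checks that four arcs are actually attainable, so the bound is best possible. Translating these arcs back through the bijection yields at most four disjoint contiguous conflict ranges, as claimed. The only real obstacle is carrying out the frame change correctly so that the conflict condition collapses to the normal form ``$t(\alpha)\in Q$'' with $t(\alpha)$ tracing a circle; once that is in place, the combinatorial count of circle--rectangle arcs is immediate.
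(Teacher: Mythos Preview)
Your proof is correct and takes a cleaner, more abstract route than the paper. The paper argues directly: since the two rectangles stay axis-parallel throughout the rotation, a boundary-only intersection must be one of the four side pairings $t\cap b'$, $b\cap t'$, $l\cap r'$, $r\cap l'$, each occurring for at most two rotation angles; eight endpoints give at most four intervals. You instead pass to the frame of $\ell$, observe that the residual motion of $\ell'$ is a translation tracing a circle $\mathcal C$, and rewrite the conflict condition as $t(\alpha)\in Q$ with $Q=\ell\oplus(-\ell')$ an axis-aligned rectangle; the bound then drops out from the fact that a circle meets four lines in at most eight points. The two arguments are really the same count in disguise---each side of $Q$ corresponds to one of the paper's four side pairings---but your formulation handles the degenerate cases (tangencies, corners, $v=0$) more uniformly and makes the role of the anchor positions transparent, while the paper's explicit enumeration has the advantage of feeding directly into the next lemma, where the eight event angles are written down in closed form.
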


\begin{proof}
  The first observation is that due to the simultaneous rotation of
  all initially axis-parallel labels in $L$, $\ell$ and $\ell'$ remain
  ``parallel'' at any rotation angle $\alpha$. Rotation is a
  continuous movement and hence any maximal contiguous conflict range
  in $C(\ell,\ell')$ must be a closed ``interval'' $[\alpha, \beta]$,
  where $0 \le \alpha, \beta < 2\pi$. Here we explicitly allow $\alpha
  > \beta$ by defining, in that case, $[\alpha,\beta] = [\alpha,2\pi)
  \cup [0,\beta]$. At a rotation of $\alpha$ (resp. $\beta$) the two
  labels $\ell$ and $\ell'$ intersect only on their boundary. Let $l,
  r, t, b$ be the left, right, top, and bottom sides of $\ell$ and let
  $l', r', t', b'$ be the left, right, top, and bottom sides of
  $\ell'$ (defined at a rotation of $0$). Since $\ell$ and $\ell'$ are
  parallel, the only possible cases, in which they intersect on their
  boundary but not in their interior are $t \cap b'$, $b \cap t'$, $l
  \cap r'$, and $r \cap l'$. Each of those four cases may appear
  twice, once for each pair of opposite corners contained in the
  intersection. Figure~\ref{fig:8conflicts} shows all eight boundary
  intersection events. Each of the conflicts defines a unique rotation
  angle and obviously at most four disjoint conflict ranges can be
  defined with these eight rotation angles as their endpoints.
  \begin{figure}[tb]
    \centering
    \subfloat[$r \cap l'$]{\includegraphics[page=2]{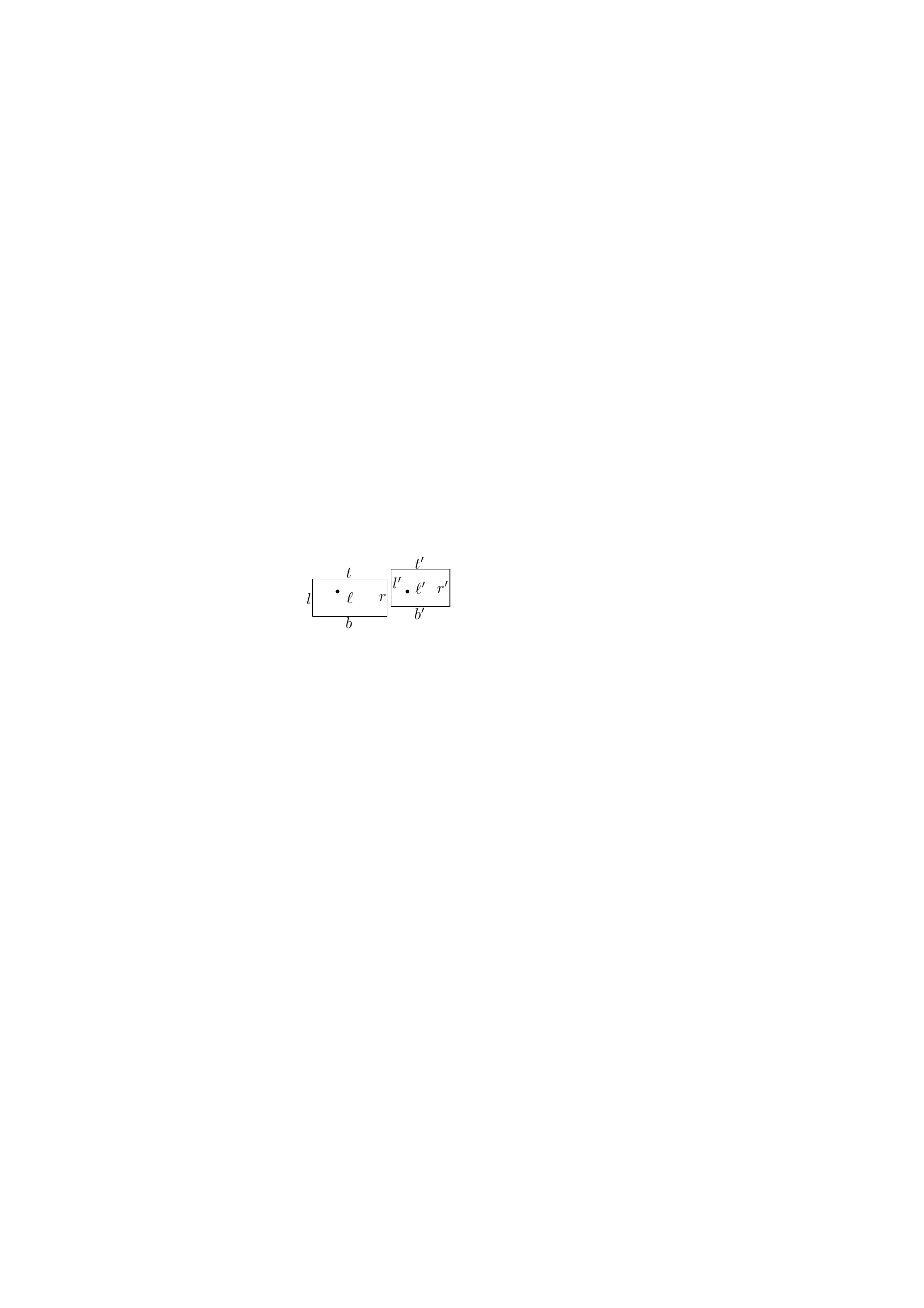}}
    \hfill
    \subfloat[$b \cap t'$]{\includegraphics[page=3]{8conflicts}}
    \hfill
    \subfloat[$b \cap t'$]{\includegraphics[page=4]{8conflicts}}
    \hfill
    \subfloat[$l \cap r'$]{\includegraphics[page=5]{8conflicts}}
    \hfill
    \subfloat[$l \cap r'$]{\includegraphics[page=6]{8conflicts}}
    \hfill
    \subfloat[$t \cap b'$]{\includegraphics[page=7]{8conflicts}}
    \hfill
    \subfloat[$t \cap b'$]{\includegraphics[page=8]{8conflicts}}
    \hfil
    \subfloat[$r \cap l'$]{\includegraphics[page=9]{8conflicts}}
    \caption{Two labels $\ell$ and $\ell'$ and their eight possible
      boundary intersection events. Anchor points are marked as black
      dots.}
    \label{fig:8conflicts}
  \end{figure}
\qed
\end{proof}

In the following we look more closely at the conditions under which
the boundary intersection events (also called \emph{conflict events})
occur and at the rotation angles defining them. Let $h_t$ and $h_b$ be
the distances from $p$ to $t$ and $b$, respectively. Similarly, let
$w_l$ and $w_b$ be the distances from $p$ to $l$ and $r$, respectively
(see Figure~\ref{fig:rect-notation}). By $h'_t$, $h'_b$, $w'_l$, and
$w'_r$ we denote the corresponding values for label $\ell'$. Finally,
let $d$ be the distance of the two anchor points $p$ and $p'$. To improve readability of the following lemmas we define two functions $f_d(x) = \arcsin(x/d)$ and $g_d(x) = \arccos(x/d)$.

%%the figure counter is hacked to work in the minipage environment
%%don't forget to remove if the figure is ever split
\begin{figure}[tb]
	\centering
  \includegraphics[page=1]{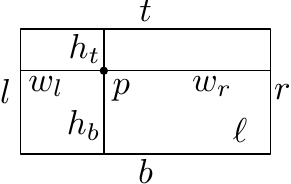}
	\caption{Parameters of label $\ell$ anchored at $p$.}
  \label{fig:rect-notation}
\end{figure}	
	
\begin{lemma}\label{lem:conflicts}
  Let $\ell$ and $\ell'$ be two labels anchored at points $p$ and
  $p'$.  Then the
  conflict events in $C(\ell,\ell')$ are a subset of
  $\mathcal{C} = \{2\pi - f_d(h_t+h'_b), \pi + f_d(h_t+h'_b), f_d(h_b+h'_t), \pi - f_d(h_b+h'_t), 2\pi - g_d(w_r + w'_l), g_d(w_r +
  w'_l), \pi - g_d(w_l + w'_r), \pi + g_d(w_l +
  w'_r)\}$.
\end{lemma}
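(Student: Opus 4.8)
The plan is to analyze each of the four conflict types from Lemma~\ref{lem:4regions} separately and compute the exact rotation angle at which the corresponding boundary intersection event occurs. Recall that by Lemma~\ref{lem:4regions} the only boundary intersection events are $t\cap b'$, $b\cap t'$, $l\cap r'$, and $r\cap l'$, each possibly occurring twice. Since all labels rotate together around their anchors, at any angle $\alpha$ the sides $t,b$ of $\ell$ and $t',b'$ of $\ell'$ are all parallel, and $l,r,l',r'$ are all parallel and perpendicular to the first group. The key geometric fact is that, for instance, a $t\cap b'$ event happens exactly when the (signed) perpendicular distance between the line through the top side of $\ell$ and the line through the bottom side of $\ell'$ becomes zero; the distance from $p$ to the top side is $h_t$ and from $p'$ to the bottom side is $h'_b$, so the two lines coincide precisely when the component of the vector from $p$ to $p'$ in the direction perpendicular to these (rotating) sides equals $h_t+h'_b$.

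First I would set up coordinates with $p$ at the origin and $p'$ at distance $d$ in some fixed direction. As the labels rotate by $\alpha$, the outward normal of the top side of $\ell$ rotates accordingly, and the condition ``line through $t$ = line through $b'$'' reduces to an equation of the form $d\cdot(\text{trig function of }\alpha) = h_t + h'_b$. Solving for $\alpha$ gives $\alpha = f_d(h_t+h'_b)$ up to the ambiguity of $\arcsin$, which is exactly why each event type contributes two angles: $f_d(x)$ and its reflection, appearing in the list as the pairs $\{f_d(h_b+h'_t),\,\pi - f_d(h_b+h'_t)\}$ and $\{2\pi - f_d(h_t+h'_b),\,\pi + f_d(h_t+h'_b)\}$. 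The horizontal-side cases $l\cap r'$ and $r\cap l'$ are handled identically but with the roles of sine and cosine swapped (equivalently, a $\pi/2$ phase shift), which is why they involve $g_d = \arccos(\cdot/d)$ applied to the width sums $w_l+w'_r$ and $w_r+w'_l$; this yields the remaining four candidate angles.

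The main step is therefore a careful bookkeeping of signs and of which of the two $\arcsin$/$\arccos$ branches corresponds to which of the two geometric sub-cases (the two pairs of opposite corners in Figure~\ref{fig:8conflicts}), together with checking that the constant offsets ($0$, $\pi$, $2\pi-$, $\pi-$) match the chosen orientation conventions for ``top/bottom/left/right'' and the counterclockwise rotation of labels. I would verify one representative case — say $t\cap b'$ — in full detail, derive both angles $2\pi - f_d(h_t+h'_b)$ and $\pi + f_d(h_t+h'_b)$, and then argue the other three cases follow by the symmetry of swapping top$\leftrightarrow$bottom (which replaces $h_t+h'_b$ by $h_b+h'_t$ and shifts the angle by $\pi$) and by the $90^\circ$ rotation taking vertical sides to horizontal sides (which turns $f_d$ into $g_d$ and the height parameters into width parameters). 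Note the statement only claims $C(\ell,\ell')\subseteq\mathcal C$, so I do not need every listed angle to be an actual conflict endpoint — it suffices that every genuine conflict event is one of these eight values, which follows once each of the four event types has been shown to occur only at the two stated angles. I also need to remark that $f_d$ and $g_d$ are only defined when the argument is at most $d$; if some height or width sum exceeds $d$ the corresponding event simply cannot occur, so that candidate is vacuously absent from $C(\ell,\ell')$, consistent with the subset claim.

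The part I expect to be most delicate is not the trigonometry itself but pinning down the exact additive constants and branch choices so that the eight expressions in $\mathcal C$ come out verbatim as written; a sign error or an off-by-$\pi$ in the orientation convention would give a superficially different but equivalent list. I would manage this by fixing, once and for all, that rotation is counterclockwise and that at $\alpha=0$ the label sides have their natural names, then reading off each angle from the configuration in the corresponding subfigure of Figure~\ref{fig:8conflicts}.
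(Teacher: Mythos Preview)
Your proposal is correct and follows essentially the same approach as the paper: a case-by-case trigonometric analysis of the four boundary intersection types $t\cap b'$, $b\cap t'$, $r\cap l'$, $l\cap r'$ from Lemma~\ref{lem:4regions}, after normalizing so that $p$ and $p'$ lie on a horizontal line. The paper likewise works out one representative case ($t\cap b'$) in detail, obtains the two angles $2\pi - f_d(h_t+h'_b)$ and $\pi + f_d(h_t+h'_b)$ together with the side condition $d \ge h_t + h'_b$, and then treats the remaining three cases by the same (or ``analogous''/``similar'') reasoning---exactly the symmetry reductions you describe.
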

\begin{proof}
Assume without loss of generality that $p$ and $p'$ lie on a
  horizontal line.
  First we show that the possible conflict events are precisely the
  rotation angles in~$\mathcal{C}$. We start considering the
  intersection of the two sides $t$ and $b'$. If there is a rotation
  angle under which $t$ and $b'$ intersect then we have the situation
  depicted in Figure~\ref{fig:t-bprime} and by simple trigonometric
  reasoning the two rotation angles at which the conflict events occur
  are $2\pi - \arcsin ((h_t+h'_b)/d)$ and $\pi + \arcsin
  ((h_t+h'_b)/d)$. Obviously, we need $d \ge h_t + h'_b$. Furthermore,
  for the intersection in Figure~\ref{fig:t-bprime-1} to be non-empty,
  we need $d^2 \le (w_r + w'_l)^2 + (h_t + h'_b)^2$; similarly, for
  the intersection in Figure~\ref{fig:t-bprime-2}, we need $d^2 \le
  (w_l + w'_r)^2 + (h_t + h'_b)^2$.

\begin{figure}[tb]
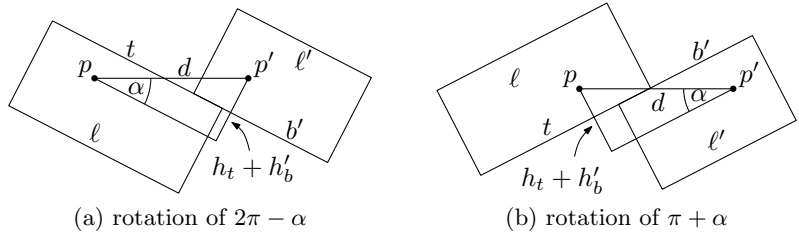

	\centering
	\subfloat[rotation of $2\pi -
\alpha$]{\includegraphics[page=2]{rect-notation}\label{fig:t-bprime-1}}
  \hfil
  \subfloat[rotation of $\pi +
\alpha$]{\includegraphics[page=3]{rect-notation}\label{fig:t-bprime-2}}
\caption{Boundary intersection events for $t \cap b'$.}
\label{fig:t-bprime}
\end{figure}

  From an analogous argument we obtain that the rotation angles under
  which $b$ and $t'$ intersect are $\arcsin ((h_b+h'_t)/d)$ and $\pi -
  \arcsin ((h_b+h'_t)/d)$. Clearly, we need $d \ge h_b +
  h'_t$. Furthermore, we need $d^2 \le (w_r + w'_l)^2 + (h_b +
  h'_t)^2$ for the first intersection and $d^2 \le (w_l + w'_r)^2 +
  (h_b + h'_t)^2$ for the second intersection to be non-empty under
  the above rotations.

  The next case is the intersection of the two sides $r$ and $l'$,
  depicted in Figure~\ref{fig:r-lprime}. 
  Here the two rotation angles
  at which the conflict events occur are $2\pi - \arccos
  ((w_r + w'_l)/d)$ and $\arccos ((w_r + w'_l)/d)$. 
  For the first conflict event we
  need $d^2 \le (w_r + w'_l)^2 + (h_t + h'_b)^2$, and for the second
  we need $d^2 \le (w_r + w'_l)^2 + (h_b +   h'_t)^2$.
  For each of the intersections to be non-empty we additionally require 
  that $d \ge  w_r + w'_l$. 
%   \andreas{changed}

%   For the case of Figure~\ref{fig:r-lprime-1} we also
%   need $d^2 \le (w_r + w'_l)^2 + (h_t + h'_b)^2$, and for the case of
%   Figure~\ref{fig:r-lprime-2} we need $d^2 \le (w_r + w'_l)^2 + (h_b +
%   h'_t)^2$.

   \begin{figure}[tb]
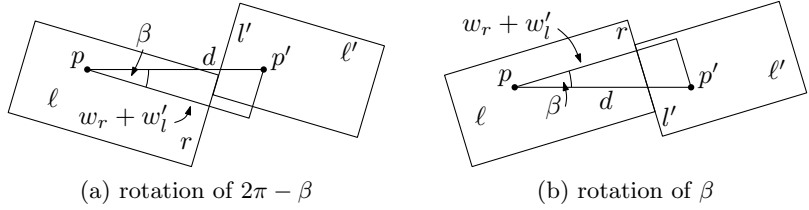

       \centering
       \subfloat[rotation of $2\pi -
       \beta$\label{fig:r-lprime-1}]{\includegraphics[page=4]{rect-notation}}
       \hfil
       \subfloat[rotation of
       $\beta$\label{fig:r-lprime-2}]{\includegraphics[page=5]{rect-notation}}
  
       \caption{Boundary intersection events for $r \cap l'$.}
       \label{fig:r-lprime}      
   \end{figure}

  Similar reasoning for the final conflict events of $l \cap
  r'$ yields the rotation angles $\pi - \arccos ((w_l + w'_r)/d)$ and
  $\pi + \arccos ((w_l + w'_r)/d)$. The additional constraints are $d
  \ge w_l + w'_r$ for both events and $d^2 \le (w_l + w'_r)^2 + (h_b +
  h'_t)^2$ for the first intersection and $d^2 \le (w_l + w'_r)^2 +
  (h_t + h'_b)^2$. Thus, $\mathcal{C}$ contains all
    possible conflict events.
%    In summary, this proves that $\mathcal{C}$ contains indeed all
%    possible conflict events. 
%   It remains to show that for any distance
%   $d$ between $p$ and $p'$ the subset of actual conflict events
%   contains an even number of events. For $d < \min \{ h_t + h'_b, h_b
%   + h'_t, w_r + w'_l, w_l + w'_r\}$ none of the events takes place and
%   in fact $C(\ell,\ell') = [0,2\pi)$, i.e., the two labels intersect
%   at any rotation. Now we let $d$ grow and observe the change in the
%   set of events. Every time $d$ reaches one of the four lower bounds,
%   two angles are added to the set of conflict events (the
%   corresponding upper bounds of these two events are at least as large
%   as the lower bound). Similarly, every time $d$ reaches one of the
%   four upper bounds, we know that the two events belonging to this
%   bound are currently in the set of events; we remove both of
%   them. This shows that the set of actual conflict events
%   has even cardinality for any $d$. We can thus properly define up to
%   four conflict ranges for any pair of labels $\ell$ and $\ell'$.
\qed
\end{proof}

One of the requirements for a valid labeling is that no label may
contain a point in $P$ other than its anchor point. For each label
$\ell$ this gives rise to a special class of conflict ranges, called
\emph{hard} conflict ranges, in which $\ell$ may never be active. The
rotation angles at which hard conflicts start or end are called
\emph{hard} conflict events. Every angle that is a (hard) conflict event 
is called a \emph{label event}.  Obviously, every
hard conflict is also a regular conflict. Regular conflicts that are
not hard conflicts are also called \emph{soft} conflicts. We note that
by definition regular conflicts are symmetric, i.e., $C(\ell, \ell') =
C(\ell',\ell)$, whereas hard conflicts are not symmetric.  The next
lemma characterizes the hard conflict ranges.

\begin{lemma}
  For a label $\ell$ anchored at point $p$ and a point $q\ne p$ in
  $P$, the hard conflict events of $\ell$ and $q$ are a
  subset of $\mathcal{H} = \{2\pi - f_d(h_t),
  \pi + f_d(h_t), f_d(h_b), \pi - f_d(h_b), 2\pi
  - g_d(w_r), g_d(w_r), \pi - g_d(w_l), \pi +
  g_d(w_l)\}$.
\end{lemma}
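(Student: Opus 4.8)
The plan is to reduce the hard-conflict case to the regular-conflict case already handled by Lemma~\ref{lem:conflicts}. The key observation is that a point $q$ behaves exactly like the anchor point of a degenerate ``label'' $\ell_q$ that has zero width and zero height, i.e.\ $\ell_q = \{q\}$ with $h'_t = h'_b = w'_l = w'_r = 0$. A hard conflict between $\ell$ and $q$ occurs precisely when $q$ lies inside $\ell$, which is the same as saying the (degenerate) rectangle $\ell_q$ intersects $\ell$; this is exactly the regular-conflict condition $C(\ell,\ell_q)$ with this choice of parameters. Hence the hard conflict events of $\ell$ and $q$ are a subset of the set $\mathcal{C}$ from Lemma~\ref{lem:conflicts} evaluated at $h'_b = h'_t = w'_l = w'_r = 0$, which, after substituting, is exactly $\mathcal{H}$: $f_d(h_t+h'_b)$ becomes $f_d(h_t)$, $g_d(w_r+w'_l)$ becomes $g_d(w_r)$, and so on.

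Concretely, I would first argue that $q \in \ell$ at rotation $\alpha$ if and only if $\ell_q$ and $\ell$ have a (regular) conflict at $\alpha$ in the sense of Section~\ref{sec:properties} — this is immediate since $\ell_q = \{q\}$, so the two sets intersect iff $q \in \ell$. Then I would note that Lemma~\ref{lem:conflicts} was proved in the general model where anchor points may be arbitrary points of a label, and its proof makes no use of non-degeneracy of $\ell'$: the trigonometric derivation of each conflict angle only refers to the four side-distances of each label and to the distance $d$ between the anchor points, all of which remain well defined when one label degenerates to a point. Applying Lemma~\ref{lem:conflicts} to the pair $\ell, \ell_q$ and specializing the primed parameters to zero then yields that the hard conflict events lie in $\mathcal{H}$.

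Alternatively, if one prefers a self-contained argument, one can redo the trigonometry directly: fix $p$ and $q$ on a horizontal line at distance $d$, rotate $\ell$ about $p$, and determine the angles at which $q$ crosses each of the four sides $t, b, l, r$ of $\ell$. Crossing the top side $t$ happens when the signed vertical distance from $p$ to $q$ equals $h_t$, i.e.\ $d\sin\theta = h_t$ for the appropriate orientation $\theta$ of the rotated frame, giving the two solutions $2\pi - f_d(h_t)$ and $\pi + f_d(h_t)$; crossing $b$ gives $f_d(h_b)$ and $\pi - f_d(h_b)$; crossing $l$ and $r$ analogously give the four $g_d$ terms. These are precisely the eight angles in $\mathcal{H}$, and every hard conflict event — being a moment when $q$ enters or leaves $\ell$ — must be one of them.

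The main obstacle is essentially bookkeeping rather than a genuine difficulty: one must make sure the substitution of zero parameters is matched with the correct arcsin/arccos terms and the correct $2\pi$-, $\pi$-shifted solutions, and one must check that the ancillary non-emptiness inequalities in Lemma~\ref{lem:conflicts} (the conditions $d^2 \le (w_r+w'_l)^2 + (h_t+h'_b)^2$ etc.) degenerate gracefully — they do, since with the primed quantities zero they simply become conditions like $d \le w_r$ or $d \le h_t$, which is exactly the requirement that $q$ can reach the corresponding side of $\ell$ at all. No case is lost and no spurious case is introduced, so $\mathcal{H}$ contains all hard conflict events of $\ell$ and $q$. \qed
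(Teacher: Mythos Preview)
Your approach is exactly the paper's: treat $q$ as a degenerate label with $h'_t = h'_b = w'_l = w'_r = 0$ and invoke Lemma~\ref{lem:conflicts}, after which the substitution yields $\mathcal{H}$. The additional elaboration (the alternative direct trigonometric derivation and the check on the side-constraints) is not needed for the subset claim, and note a small slip there: with the primed quantities zero the inequality $d^2 \le (w_r+w'_l)^2 + (h_t+h'_b)^2$ becomes $d^2 \le w_r^2 + h_t^2$, not $d \le w_r$ or $d \le h_t$---but this does not affect the argument.
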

\begin{proof}
  We define a label of width and height $0$ for $q$, i.e., we set
  $h'_t = h'_b = w'_l = w'_r = 0$. Then the result follows immediately
  from Lemma~\ref{lem:conflicts}.
\qed
\end{proof}

A simple way to visualize conflict ranges and hard conflict ranges is
to mark, for each label $\ell$ anchored at $p$ and each of its (hard) conflict
ranges, the circular 
arcs on the circle centered at $p$ and
enclosing $\ell$. Figure~\ref{fig:markarcs} shows an example. 

\begin{figure}[tb]
\centering
    \includegraphics[page=6]{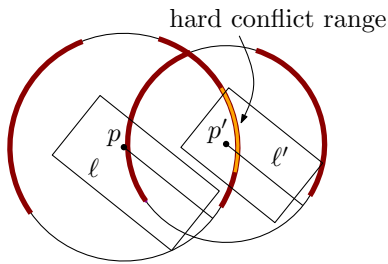}
\caption{Conflict ranges of two labels $\ell$ and $\ell'$
         marked in bold on the enclosing circles.}
\label{fig:markarcs}
\end{figure}

In the following we show that the \maxtotal problem can be discretized
in the sense that there exists an optimal solution whose active ranges
are defined as intervals whose borders are label events. An active
range \emph{border} of a label $\ell$ is an angle $\alpha$ that is
characterized by the property that the labeling $\phi$ is not constant
in any $\varepsilon$-neighborhood of $\alpha$. We call an active range where
both borders are label events a \emph{regular} active range.

\begin{lemma}\label{lem:discretize}
  Given a labeled map $M$ there is an optimal rotation labeling of $M$
consisting of only regular active ranges.
\end{lemma}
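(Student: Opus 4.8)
The plan is to start from an arbitrary optimal (\maxtotal) labeling and repeatedly slide groups of active-range borders, without ever decreasing the objective, until every active-range border is a label event. Let $E$ be the finite set of all label events, that is, the conflict events of Lemma~\ref{lem:conflicts} for all pairs of labels together with all hard conflict events; by Lemmas~\ref{lem:4regions} and~\ref{lem:conflicts} (and their analogues for hard conflicts) this is a finite set, $|E| = O(n^2)$. The angles of $E$ split $[0,2\pi)$ into finitely many open arcs, on each of which the combinatorial conflict structure is constant: for any two labels $\ell,\ell'$ the set $C(\ell,\ell')$ is a union of at most four intervals whose endpoints lie in $E$, hence it contains any fixed arc entirely or is disjoint from it, and the same holds for every hard conflict range. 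Call an active-range border \emph{free} if it is not an element of $E$; a labeling consists only of regular active ranges exactly when it has no free border (a full-circle or an empty active range trivially has no non-event border).

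Among all optimal consistent labelings of $M$, pick one, $\phi$, that is lexicographically minimal for the pair $(f,g)$, where $f$ is the number of free borders and $g$ the number of distinct angles at which free borders lie, and assume for contradiction that $f \ge 1$. Choose an angle $x \notin E$ at which free borders sit and let $B$ be the set of \emph{all} active-range borders located at $x$; every border in $B$ is free. Now slide all borders of $B$ simultaneously by a common offset $\delta$. Sliding towards decreasing angles extends each label whose start is in $B$ by $\delta$ and shrinks each label whose finish is in $B$ by $\delta$, so the objective $\sum_{\ell\in L} |A_\phi(\ell)|$ changes by $\delta\cdot(a-b)$, where $a$ and $b$ are the numbers of starts and of finishes in $B$; sliding the other way changes it by $\delta\cdot(b-a)$. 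We perform the slide in whichever of the two directions makes this change nonnegative.

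It remains to check that a positive such slide is feasible and to see how it must stop. Since $x \notin E$, a punctured neighborhood of $x$ meets neither $E$ nor any hard conflict range of a label active at $x$, and every pairwise conflict relation is constant on it; using that $\phi$ is valid one verifies that a label extended slightly past $x$ cannot overlap an active label outside $B$ (such a label would itself have a border at $x$, hence lie in $B$) nor a second label of $B$ (if two labels both started at $x$ and conflicted near $x$, then $\phi$ would already be invalid). Treating active ranges as half-open removes the harmless single-angle contact between a just-extended and a just-shrunk label, so the slide is feasible for some $\delta>0$; take the largest feasible $\delta$. At that value at least one of the following occurs: (i) a border of $B$ reaches an angle of $E$; (ii) a border of $B$ meets the border of a label not in $B$; or (iii) some label's active range degenerates to the empty range or to the full circle. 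In case (i) all of $B$ now sits at a single angle of $E$ and in case (iii) some label loses both its borders, so $f$ strictly decreases; in case (ii) $f$ is unchanged but two maximal groups of coincident free borders merge, so $g$ strictly decreases. In every case the resulting labeling is again consistent and optimal (its objective did not decrease) but lexicographically smaller than $\phi$, contradicting the choice of $\phi$. Hence $f=0$, so $\phi$ consists only of regular active ranges.

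The step I expect to be the main obstacle is the feasibility analysis of the simultaneous slide in the third paragraph --- precisely, ruling out all ``external'' collisions so that the common offset can genuinely be taken positive, and checking that it indeed stops in one of the three listed ways. Once this is in place, the linearity of the objective in the slide parameter and the lexicographic potential reduce the rest to routine case checking.
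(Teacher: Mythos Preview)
Your argument is correct and follows essentially the same approach as the paper: pick an optimal labeling minimizing a potential that counts non-event borders, then shift all borders at a non-event angle without decreasing the objective to derive a contradiction. The paper's version is terser---it uses the optimality to force $|L_l|=|L_r|$ and then jumps directly to the next adjacent border $\gamma$ (so the objective is exactly preserved and only the number of distinct non-event border \emph{angles} drops), whereas you slide continuously in a non-decreasing direction and track the refined potential $(f,g)$; your explicit feasibility analysis (constancy of the conflict structure between consecutive events) is precisely the validity check that the paper leaves implicit.
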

\begin{proof}
  Let $\phi$ be an optimal labeling with a minimum number of active
  range borders that are no label events. Assume that there is at
  least one active range border $\beta$ that is no label event. Let
  $\alpha$ and $\gamma$ be the two adjacent active range borders of
  $\beta$, i.e., $\alpha < \beta < \gamma$, where $\alpha$ and $\gamma$
  are active range borders, but not necessarily label events. Then let
  $L_l$ be the set of labels whose active ranges have left border
  $\beta$ and let $L_r$ be the set of labels whose active ranges have
  right border $\beta$. For $\phi$ to be optimal $L_l$ and $L_r$ must
  have the same cardinality since otherwise we could increase the
  active ranges of the larger set and decrease the active ranges of
  the smaller set by an $\varepsilon > 0$ and obtain a better
  labeling.

  So define a new labeling $\phi'$ that is equal to $\phi$ except for
  the labels in $L_l$ and $L_r$: define the left border of the active
  ranges of all labels in $L_l$ and the right border of the active
  ranges of all labels in $L_r$ as $\gamma$ instead of $\beta$. Since
  $|L_l| = |L_r|$ we shrink and grow an equal number of active ranges
  by the same amount. Thus the two labelings $\phi$ and $\phi'$ have
  the same objective value $\sum_{\ell \in L} |A_\phi(\ell)| =
  \sum_{\ell \in L} |A_{\phi'}(\ell)|$. Because $\phi'$ uses as active
  range borders one non-label event less than $\phi$ this number was
  not minimum in $\phi$---a contradiction. As a consequence $\phi$
  has only label events as active range borders.
\qed
\end{proof}

\section{Complexity}

In this section we show that finding an optimal solution for \maxtotal
(and also \maxmin) is NP-hard even if all labels are unit squares and
their anchor points are their lower-left corners. We present a gadget
proof reducing from the NP-complete problem planar
3-SAT~\cite{Lichtenstein1982}. Before constructing the gadgets,
we show a special property of unit-square labels.

\newcommand{\lemunitsquaresrepeattext}{%
  If two unit-square labels $\ell$ and $\ell'$ whose anchor points are
  their lower-left corners have a conflict at a rotation angle
  $\alpha$, then they have conflicts at all angles $\alpha + i\cdot
  \pi/2$ for $i \in \mathbb{Z}$. %
}

\begin{lemma}\label{lem:unitsquaresrepeat}
  \lemunitsquaresrepeattext
\end{lemma}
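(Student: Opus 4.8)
The plan is to exploit the four‑fold rotational symmetry of the square, but applied to the \emph{Minkowski difference} of the two labels rather than to a single label: a unit square anchored at a corner is of course \emph{not} invariant under a $\pi/2$‑rotation about that corner, so the symmetry has to be invoked at the right place. Write $Q=[0,1]^2$ for the unit square with lower‑left corner at the origin, and let $R_\vartheta$ denote the rotation by $\vartheta$ about the origin. With anchor points at $p$ and $p'$, at rotation angle $\alpha$ the two labels occupy the regions $R_\alpha(Q)+p$ and $R_\alpha(Q)+p'$, i.e. two translates of the same rotated square.

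The first step is to reduce the conflict condition at angle $\alpha$ to a statement about the single vector $p'-p$. Translating the whole picture by $-p$ and then rotating it by $-\alpha$ about the origin are both intersection‑preserving rigid motions, and together they transform the two label regions into $Q$ and $Q+R_{-\alpha}(p'-p)$. Hence $\ell$ and $\ell'$ are in conflict at $\alpha$ if and only if $Q\cap\bigl(Q+R_{-\alpha}(p'-p)\bigr)\neq\emptyset$, which by definition of the Minkowski difference is equivalent to $R_{-\alpha}(p'-p)\in Q-Q=[-1,1]^2$.

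The key observation is now that $[-1,1]^2$ is the axis‑parallel square \emph{centered at the origin}, hence invariant under $R_{\pi/2}$. Therefore $R_{-\alpha}(p'-p)\in[-1,1]^2$ if and only if $R_{-\pi/2}\bigl(R_{-\alpha}(p'-p)\bigr)=R_{-\alpha-\pi/2}(p'-p)\in R_{-\pi/2}([-1,1]^2)=[-1,1]^2$, and by the previous step this is exactly the condition that $\ell$ and $\ell'$ are in conflict at $\alpha+\pi/2$. Since this equivalence and its inverse hold for every angle, iterating it yields a conflict at $\alpha+i\cdot\pi/2$ for all $i\in\mathbb{Z}$, as claimed.

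There is no real obstacle here; the only thing to get right is the intersection‑preserving reduction in the second step and the easily missed point that passing to the Minkowski difference $Q-Q$ restores the $\pi/2$‑symmetry that the corner‑anchored square itself lacks. (One could alternatively read off from Lemma~\ref{lem:conflicts} that for corner‑anchored unit squares every conflict event lies in a set of the form $\{\pm\arcsin(1/d)+k\pi/2\mid k\in\mathbb{Z}\}$, which is $\pi/2$‑periodic; but turning periodicity of the conflict \emph{endpoints} into periodicity of the conflict \emph{ranges} still needs essentially the geometric argument above, so the direct route via the Minkowski difference is preferable.)
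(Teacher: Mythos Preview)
Your proof is correct and takes a genuinely different route from the paper. The paper plugs the unit-square parameters into Lemma~\ref{lem:conflicts}, obtains the eight conflict events explicitly, and uses $\arccos x=\pi/2-\arcsin x$ to rewrite them as $\{f_d,\ \pi/2-f_d,\ \pi/2+f_d,\ \pi-f_d,\ \pi+f_d,\ 3\pi/2-f_d,\ 3\pi/2+f_d,\ 2\pi-f_d\}$ with $f_d=\arcsin(1/d)$, from which the $\pi/2$-periodicity of the event set is read off. Your Minkowski-difference argument bypasses the trigonometry entirely: reducing the conflict condition at angle $\alpha$ to $R_{-\alpha}(p'-p)\in[-1,1]^2$ makes the $\pi/2$-symmetry a one-line observation about the centered square. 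Your approach is more conceptual and, as you point out, proves periodicity of the conflict \emph{set} directly; the paper's computation only gives periodicity of the boundary events and tacitly relies on one extra step (conflict and non-conflict arcs alternate on the circle, and the $\pi/2$ shift is a cyclic shift of the ordered event list by two positions, hence parity-preserving) to conclude periodicity of the ranges. Your parenthetical remark slightly overstates what is needed to close that gap---the alternation argument suffices without the full Minkowski reduction---but the observation that the paper's last sentence is doing real work is apt.
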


\begin{proof}
  Similar to the notation used in Section~\ref{sec:properties}, let $f_d =
\arcsin(1/d)$ and $g_d = \arccos(1/d)$.
  From Lemma~\ref{lem:conflicts} we obtain the set $\mathcal{C} =
  \{2\pi -  f_d, \pi +  f_d,  f_d, \pi -
   f_d, 2\pi - g_d, g_d, \pi - g_d, \pi +
g_d\}$ of conflict events for which it is
  necessary that the distance $d$ between the two anchor points is $1
  \le d \le \sqrt{2}$. Since $\arccos x = \pi/2 - \arcsin x$ the set
  $\mathcal{C}$ can be rewritten as  $\mathcal{C} = \{  f_d,
  \pi/2 -  f_d, \pi/2 +  f_d, \pi -  f_d
  , \pi +  f_d, 3\pi/2 -  f_d, 3\pi/2 +  f_d, 2\pi -
 f_d\}$. This shows that conflicts repeat
  after every rotation of~$\pi/2$.
\qed
\end{proof}

For every label $\ell$ we define the \emph{outer circle} of $\ell$ as
the circle of radius $\sqrt{2}$ centered at the anchor point of
$\ell$. Since the top-right corner of $\ell$ traces the outer circle
we will use the locus of that corner to visualize active ranges or
conflict ranges on the outer circle. Note that due to the fact that at
the initial rotation of $0$ the diagonal from the anchor point to the
top-right corner of $\ell$ forms an angle of $\pi/4$ all marked ranges
are actually offset by $\pi/4$.

% Next we present three basic building blocks for the gadgets and their
% properties.

\subsection{Basic Building Blocks}

\paragraph{Chain.} A \emph{chain} consists of at least four labels
anchored at collinear points that are evenly spaced with
distance~$\sqrt{2}$. Hence, each point is placed on the outer circles
of its neighbors.  We call the first and last two labels of a chain
\emph{terminals} and the remaining part \emph{inner chain}, see
Figure~\ref{fig:buildingblocks:chain}. We denote an assignment of
active ranges to the labels as the \emph{state} of the chain. The
important observation is that in any optimal solution of \maxtotal an
inner chain has only two different states, whereas terminals have
multiple optimal states that are all equivalent for our purposes; see
Figure~\ref{fig:buildingblocks:chain}.  In particular, in an optimal
solution each label of an inner chain has an active range of length
$\pi$ and active ranges alternate between adjacent labels. We will use
the two states of chains as a way to encode truth values in our
reduction.

\newcommand{\leminnerchaintext}{%
%   Let $\ell_1, \ldots, \ell_k$ be the labels of an inner chain. Then
%   in any optimal solution each label has an active range of length
%   $\pi$ alternating between $(0,\pi)$ and $(\pi,2\pi)$.%
In any optimal solution, any label of an inner chain has an active range of
length $\pi$. The active ranges of consecutive labels alternate between
$(0,\pi)$ and $(\pi, 2\pi)$.
}

\begin{lemma}\label{lem:innerchain}
  \leminnerchaintext
\end{lemma}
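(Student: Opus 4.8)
The plan is to prove the lemma in two parts: an upper bound of $\pi$ on the active range of each single inner chain label, and a rigidity argument showing that attaining this bound on all of them is possible only for the two alternating patterns. I would start by pinning down the local geometry of a chain. Fix an inner chain label $\ell$ anchored at $p$ and let $p^{-},p^{+}$ be the anchor points of its two chain neighbours, both at distance $\sqrt2$ from $p$. Since $\ell$ is a unit square anchored at its lower-left corner, $\sqrt2$ is exactly the maximum distance from $p$ to any point of $\ell$, realised only by the top-right corner of $\ell$. Hence $\ell$ occludes $p^{+}$ at a single rotation angle (the one that brings the top-right corner of $\ell$ onto $p^{+}$), and likewise $\ell$ occludes $p^{-}$ at a single angle; as $p^{-}$ and $p^{+}$ lie in opposite directions from $p$, these two angles differ by $\pi$, and with the orientation given to the chain they are $0$ and $\pi$. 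Consequently, in every valid labeling $\phi$ the active range $A_\phi(\ell)$ contains neither $0$ nor $\pi$; being a single contiguous arc, it is therefore contained in $(0,\pi)$ or in $(\pi,2\pi)$ --- I would call $\ell$ of \emph{type A} or \emph{type B} accordingly --- and in particular $|A_\phi(\ell)|\le\pi$.

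Next I would record the conflict structure along the chain. By Lemma~\ref{lem:conflicts} (equivalently by the conflict set in Lemma~\ref{lem:unitsquaresrepeat}), two consecutive chain labels, whose anchors are at distance $\sqrt2$, intersect at exactly the four rotation angles $0,\pi/2,\pi,3\pi/2$, whereas two chain labels whose anchors are at distance $\ge2\sqrt2$ never intersect (a unit square reaches only distance $\sqrt2$ from its anchor, so non-consecutive labels impose nothing). Of these four conflict angles the two extreme ones, $0$ and $\pi$, are already excluded by the occlusions above, so the only conflict that can actually bite between two type-A labels is at $\pi/2$, the only one between two type-B labels is at $3\pi/2$, and a type-A and a type-B label are never in conflict. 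In particular the alternating assignment that gives the inner chain labels the ranges $(0,\pi),(\pi,2\pi),(0,\pi),\dots$ in turn is valid and assigns each of the $k$ inner chain labels an active range of length $\pi$, hence contributes $k\pi$ on the inner chain, matching the per-label upper bound.

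For the rigidity I would split the inner chain of an arbitrary valid labeling into maximal runs of consecutive labels of equal type. Within a run of $m$ type-A labels, no two consecutive of them may both be active at $\pi/2$, so at least $\lfloor m/2\rfloor$ of them miss $\pi/2$; each such label, being a contiguous arc inside $(0,\pi)$ that avoids its midpoint $\pi/2$, has length at most $\pi/2$, while the remaining at most $\lceil m/2\rceil$ labels have length at most $\pi$. Thus a run of length $m$ contributes at most $\lceil m/2\rceil\pi+\lfloor m/2\rfloor\,\pi/2$, which equals $m\pi$ exactly when $m=1$ and is strictly smaller than $m\pi$ whenever $m\ge2$ (and symmetrically for type-B runs with $3\pi/2$ in place of $\pi/2$). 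Summing over the runs, a valid labeling attains the value $k\pi$ on the inner chain if and only if every run has length $1$, i.e.\ the types strictly alternate; and in that case each inner label must attain its maximum $\pi$, which forces its active range to be exactly $(0,\pi)$ or exactly $(\pi,2\pi)$.

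Finally, it remains to observe that the alternating assignment can be completed to a globally optimal labeling of $M$: each of the two terminal pairs of the chain admits several optimal states, at least one of which is compatible with either alternating state of the inner chain, so they can be chosen accordingly with no loss elsewhere. Hence the optimum of \maxtotal restricted to the inner chain is exactly $k\pi$, and by the rigidity argument every optimal labeling realises this only via one of the two alternating patterns --- every inner chain label then has an active range of length $\pi$, and the ranges alternate between $(0,\pi)$ and $(\pi,2\pi)$ --- which is the claim. The main obstacle is that the configuration is degenerate at spacing $\sqrt2$: both the occlusion conflicts and the conflicts between consecutive chain labels are single rotation angles rather than intervals, so none of the estimates come from ``large overlap regions''; they are all consequences of the interplay between these isolated forbidden angles and the requirement that each active range be one contiguous arc. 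The secondary point that needs care is the last step, making sure the alternating inner-chain pattern extends to a global optimum, which is precisely what the slack in the terminals is for.
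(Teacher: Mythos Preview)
Your argument is correct and follows the same line as the paper's proof: both identify the hard conflicts at $0$ and $\pi$ that cap each inner-chain label at length~$\pi$, the soft conflicts at $\pi/2$ and $3\pi/2$ inherited from Lemma~\ref{lem:unitsquaresrepeat}, and the two alternating states that resolve all soft conflicts while meeting the cap. Your runs-of-equal-type argument makes rigorous the step the paper dismisses as ``obviously there is no optimal assignment in which two adjacent labels have active ranges on the same side of~$\pi$,'' and your closing remark about the terminal slack fills in why optimality of the inner chain in isolation transfers to the global optimum; both of these are left implicit in the paper.
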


\begin{proof}
  By construction every label has two hard conflicts at angles $0$ and
  $\pi$, so no active range can have length larger than $\pi$. From
  Lemma~\ref{lem:unitsquaresrepeat} we know that every label has
  conflicts at $\pi/2$ and $3\pi/2$. These conflicts are soft
  conflicts and can be resolved by either assigning all odd labels the
  active range $(0,\pi)$ and all even labels the active range
  $(\pi,2\pi)$ or vice versa. Obviously both assignments are optimal
  and there is no optimal assignment in which two adjacent labels have
  active ranges on the same side of $\pi$.
\qed
\end{proof}

  For inner chains whose distance between two adjacent points is less
  than $\sqrt{2}$ the length of the conflict region changes, but the
  above arguments remain valid for any distance between~$1$ and~$\sqrt{2}$.
%   \andreas{gehört dieser Absatz mit in den Beweis?}

\begin{figure}[tb]
  \subfloat[A chain whose two different states are marked as full
  green and dashed blue arcs.]{
\includegraphics[page=1, width=.45\textwidth]{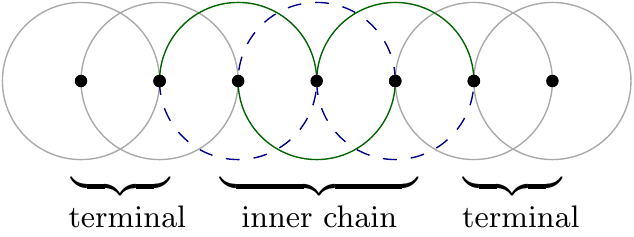}
\label{fig:buildingblocks:chain}
}
\hfill
\subfloat[A turn that splits one inner chain into two inner
chains.]{
\includegraphics[page=1,  width=.45\textwidth]{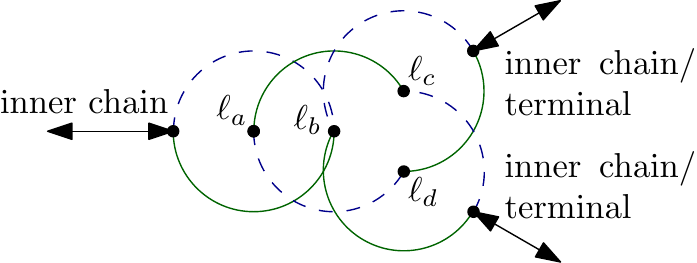}
\label{fig:buildingblocks:turn}
}

\subfloat[Inverter.]{
\includegraphics[page=1,  width=.35\textwidth]{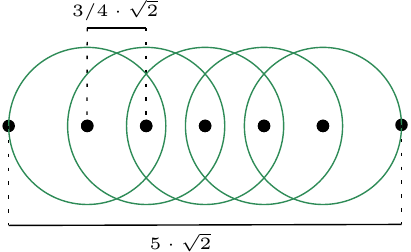}
\label{fig:buildingblocks:inverter}
}
\hfill
\subfloat[Literal Reader.]{
\includegraphics[ width=.65\textwidth]{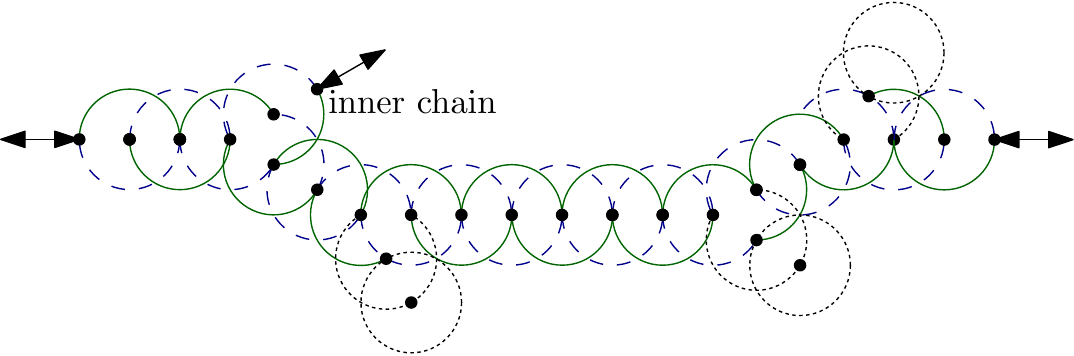}
\label{fig:buildingblocks:var}
}

\label{fig:buildingblocks}
\caption[]{Basic Building Blocks.}
\vspace{-3ex}
\end{figure}

\paragraph{Inverter.} The second basic building block is an
\emph{inverter}.  It consists of five collinear labels that are
evenly spaced with distance~$3/4\cdot\sqrt{2}$ as depicted in
Figure~\ref{fig:buildingblocks:inverter}.  This means that
the five labels together take up the same space as four labels in a
 usual inner chain. Similar to Lemma~\ref{lem:innerchain} the active
 ranges in an optimal solution also alternate. By replacing four labels
 of an inner chain with an inverter we can alter the parity of an inner
 chain.

\paragraph{Turn.} The third building block is a \emph{turn} that
consists of four labels, see Figure~\ref{fig:buildingblocks:turn}.  The anchor
points $p_a$ and
$p_b$ are at distance $\sqrt{2}$ and the pairwise distances between
$p_b$, $p_c$, and $p_d$ are also $\sqrt{2}$ such that the whole
structure is symmetric with respect to the line through $p_a$ and
$p_b$. The central point $p_b$ is called \emph{turn point}, and the
two points $p_c$ and $p_d$ are called \emph{outgoing points}. Due to
the hard conflicts created by the four points we observe that the
outer circle of $p_b$ is divided into two ranges of length $5\pi/6$
and one range of length $\pi/3$. The outer circles of the outgoing
points are divided into ranges of length $\pi$, $2\pi/3$, and
$\pi/3$. The outer circle of $p_a$ is divided into two ranges of
length~$\pi$. The outgoing points serve as connectors to terminals,
inner chains, or further turns. Note, by coupling multiple turns
we can divert an inner chain by any multiple of $30\degree$.

\newcommand{\lemturnsplittext}{%
  A turn has only two optimal states and allows to split an inner
  chain into two equivalent parts in an optimal solution.%
}

\begin{lemma}\label{lem:turnsplit}
  \lemturnsplittext
\end{lemma}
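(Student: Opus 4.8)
The plan is to treat the turn as a small self-contained sub-instance and reduce its analysis to the inner-chain argument of Lemma~\ref{lem:innerchain} together with a finite case check. First I would collect all conflicts among the four turn labels $\ell_a,\ell_b,\ell_c,\ell_d$ (anchored at $p_a,p_b,p_c,p_d$) and between them and the first labels of the chains attached at $p_a,p_c,p_d$. Every relevant pair of anchors inside the turn is at distance exactly $\sqrt2$, so by the hard-conflict lemma each such pair contributes a single point hard conflict, and by Lemmas~\ref{lem:conflicts} and~\ref{lem:unitsquaresrepeat} its soft conflicts are point conflicts as well. Plugging in the stated positions of $p_a,\dots,p_d$ reproduces the arc decompositions announced above: the outer circle of $p_a$ has two free arcs of length $\pi$; each outgoing circle has free arcs of lengths $\pi,2\pi/3,\pi/3$; and the outer circle of $p_b$ has two arcs of length $5\pi/6$ and a short arc of length $\pi/3$ lying between the hard conflicts induced by $p_c$ and $p_d$. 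Since an active range is contiguous and avoids hard conflicts, $|A_\phi(\ell_a)|,|A_\phi(\ell_c)|,|A_\phi(\ell_d)|\le\pi$ and $|A_\phi(\ell_b)|\le 5\pi/6$ in any valid labeling.

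Next I would reduce to combinatorics. The exchange argument from the proofs of Lemmas~\ref{lem:innerchain} and~\ref{lem:discretize} applies verbatim: in an optimal labeling no active range of a turn label can be enlarged without shrinking another by the same amount, and we may assume every active range is regular, hence contained in one free arc. The label $\ell_a$ has precisely the hard-conflict signature of an inner-chain label -- two antipodal hard conflicts, one from $p_b$ and one from its collinear neighbour on the incoming chain -- plus a point soft conflict with that neighbour, so the reasoning of Lemma~\ref{lem:innerchain} forces $|A_\phi(\ell_a)|=\pi$, with $A_\phi(\ell_a)$ one of its two $\pi$-arcs and alternating with the incoming chain. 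This leaves exactly two candidate phases for the turn, one for each phase of the incoming chain.

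The third step, which I expect to be the main obstacle, is to show that each candidate phase admits exactly one optimal completion at $\ell_b,\ell_c,\ell_d$ and that the two resulting states are mirror images. Here $\ell_b$ has point soft conflicts with $\ell_a,\ell_c,\ell_d$, and $\ell_c,\ell_d$ each have a point soft conflict with $\ell_b$ and with each other. Using the exchange argument one first rules out that $\ell_b$ retreats into its short $\pi/3$ arc, since that would cost at least $\pi/2$ of active range and $\ell_a,\ell_c,\ell_d$ are already capped at length $\pi$; thus $\ell_b$ sits on one of its $5\pi/6$ arcs. A direct check of the finitely many point-conflict angles then shows that, with the phase of $\ell_a$ fixed, the requirement that every label keep a maximal arc while all point conflicts are avoided has a unique solution, and that this solution is invariant under the mirror symmetry of the turn across the line $p_ap_b$, so $\ell_c$ and $\ell_d$ get mirror-image active ranges of length $\pi$. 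Flipping the phase of $\ell_a$ yields the second state, of the same total length. Hence the turn has exactly two optimal states.

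Finally I would read off the splitting property. In each optimal state, the active ranges of $\ell_c$ and $\ell_d$ agree, up to the global $\pi/4$ offset of the outer-circle picture, with the two alternating inner-chain phases of Lemma~\ref{lem:innerchain}, and switching between the two turn states switches this common phase. Therefore attaching an inner chain at $p_c$ and another at $p_d$ continues the encoding consistently and identically on both branches, i.e., the turn replaces a segment of an inner chain by two inner chains in equivalent states -- which is exactly the claim. The only nonroutine ingredient is the case check of the third step: verifying that $\ell_b$ cannot be profitably sacrificed, and determining from the explicit conflict angles and the $\sqrt2$-geometry which arcs of $\ell_b,\ell_c,\ell_d$ remain available; everything else is a reuse of the inner-chain lemma plus the symmetry of the construction.
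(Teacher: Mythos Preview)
Your third step contains a genuine error. You assert that in an optimal state both $\ell_c$ and $\ell_d$ receive ``mirror-image active ranges of length $\pi$'' and that the optimal state is invariant under the reflection across the line $p_ap_b$. This is false: the anchors $p_c$ and $p_d$ are at distance $\sqrt{2}$, so by Lemma~\ref{lem:unitsquaresrepeat} the labels $\ell_c$ and $\ell_d$ have a (point) soft conflict with one another, and this conflict lies strictly in the interior of the intersection of their respective length-$\pi$ arcs. Since active ranges are contiguous, at most one of $\ell_c,\ell_d$ can take its full $\pi$-arc; the other must split around the conflict point and can achieve at most its $2\pi/3$-arc. Consequently no optimal state is fixed by the mirror symmetry; rather the two optimal states are \emph{exchanged} by it. Your ``direct check'' as described would therefore not go through, and the symmetry heuristic you rely on is the wrong way around.

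The paper's argument proceeds by computing the optimal total directly. If one insists on giving both outgoing labels long arcs (one of length $\pi$, the other at most $5\pi/6$), the resulting soft conflicts force $|A_\phi(\ell_b)|\le\pi/2$, for a total of at most $20\pi/6$. The better choice is to assign one of $\ell_c,\ell_d$ its $\pi$-arc and the other its $2\pi/3$-arc; this clears one of the two $5\pi/6$-arcs of $\ell_b$, so $\ell_b$ attains $5\pi/6$ and $\ell_a$ attains $\pi$, for a total of $21\pi/6$. The two optimal states correspond to the choice of which outgoing label sacrifices down to $2\pi/3$, and that choice is what propagates the phase of the incoming chain into both outgoing chains. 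To repair your outline you need to drop the symmetry-invariance claim, account explicitly for the $\ell_c$--$\ell_d$ soft conflict, and carry out the comparison of the two candidate totals as the paper does.
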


\begin{proof}
  We show that the optimal solution for the turn is $21/6\pi$ and that
  there are only two different active range assignments that yield this
  solution.  Note that for the label $\ell_a$ the length of its active
  range is at most $\pi$.  For $\ell_b$ it is at most $2/3\pi$ and
  for $\ell_c$ and $\ell_d$ it is at most $\pi$.

  We first observe that $\ell_c$ and $\ell_d$ cannot both have an
  active range of length $\pi$ since by
  Lemma~\ref{lem:unitsquaresrepeat} they have a soft conflict in the
  intersection of their length-$\pi$ ranges. Thus at most one of them
  has an active range of length $\pi$ and the other has an active
  range of length at most $5\pi/6$. But in that case the same
  argumentation shows that the active range of $\ell_b$ is at most
  $\pi/2$. Combined with an active range of length $\pi$ for $\ell_a$
  this yields in total a sum of $20\pi/6$. 

  On the other hand, if one of $\ell_c$ and $\ell_d$ is assigned an
  active range of length $2\pi/3$ and the other an active range of
  length $\pi$ as indicated in Figure~\ref{fig:buildingblocks:turn},
  the soft conflict of $\ell_b$ in one of its ranges of length
  $5\pi/6$ is resolved and $\ell_b$ can be assigned an active range of
  maximum length. This also holds for $\ell_a$ resulting in a total
  sum of $21\pi/6$. 

  % First, we show that in an optimal \maxtotal solution the length of
  % the active range of each label of the turn is always larger than
  % $\pi/3$.  This is a direct consquence of Oberservation [REF], which
  % shows that if two labels have for some rotation angle $\delta$ a
  % hard conflict, then there are two soft conflicts at $\delta + \pi/2$
  % and $\delta-\pi/2$.  Since $\ell_a$ cannot have as active range both
  % $[0, \pi]$ and $[\pi, 2\pi]$ we can always assign $\ell_a$ an active
  % range of length at least $\pi/2$.  Similarly, neither $\ell_c$ nor
  % $\ell_d$ can, in an optimal \maxtotal solution, have an active range
  % of length $1/3\pi$ or less.

  % Second, we show that the labels $\ell_c$ and $\ell_d$ cannot both
  % have an active range of length $\pi$. Again, this is a consquence of
  % Observation [REF].  This implies that either $\ell_c$ or $\ell_d$
  % can have an active range of length at most $5\pi/6$ while the other
  % can have an active range of length up to~$\pi$.  However, through
  % the same arugmentation we see that then the label $\ell_b$ can have
  % an active range of length at most $\pi/2$, the label $\ell_a$ can be
  % assigned an active range of length $\pi$. Hence, the total sum of
  % the lengths of the active ranges is $20\pi/6$.

  % Finally, we show that assigning $\ell_c$ an active range of length
  % $\pi$ and $\ell_d$ an active range of length $2/3\pi$ yields the
  % optimal solution. Now, we can assign $\ell_b$ and $\ell_a$ a range
  % of maximum length. Hence, the total sum of the length of the
  % active
  % ranges is $21/6\pi$.
  Since the gadget is symmetric there are only two states that produce
  an optimal solution for the lengths of the active ranges. By
  attaching inner chains to the two outgoing points the truth state of
  the inner chain to the left is transferred into both chains on the
  right.
\qed
\end{proof}

\begin{figure}[tb]
\centering
\subfloat[Clause Gadget with one inner and three outer labels.]{
\includegraphics[page=2, scale=1, trim=5 0 0 0,clip]{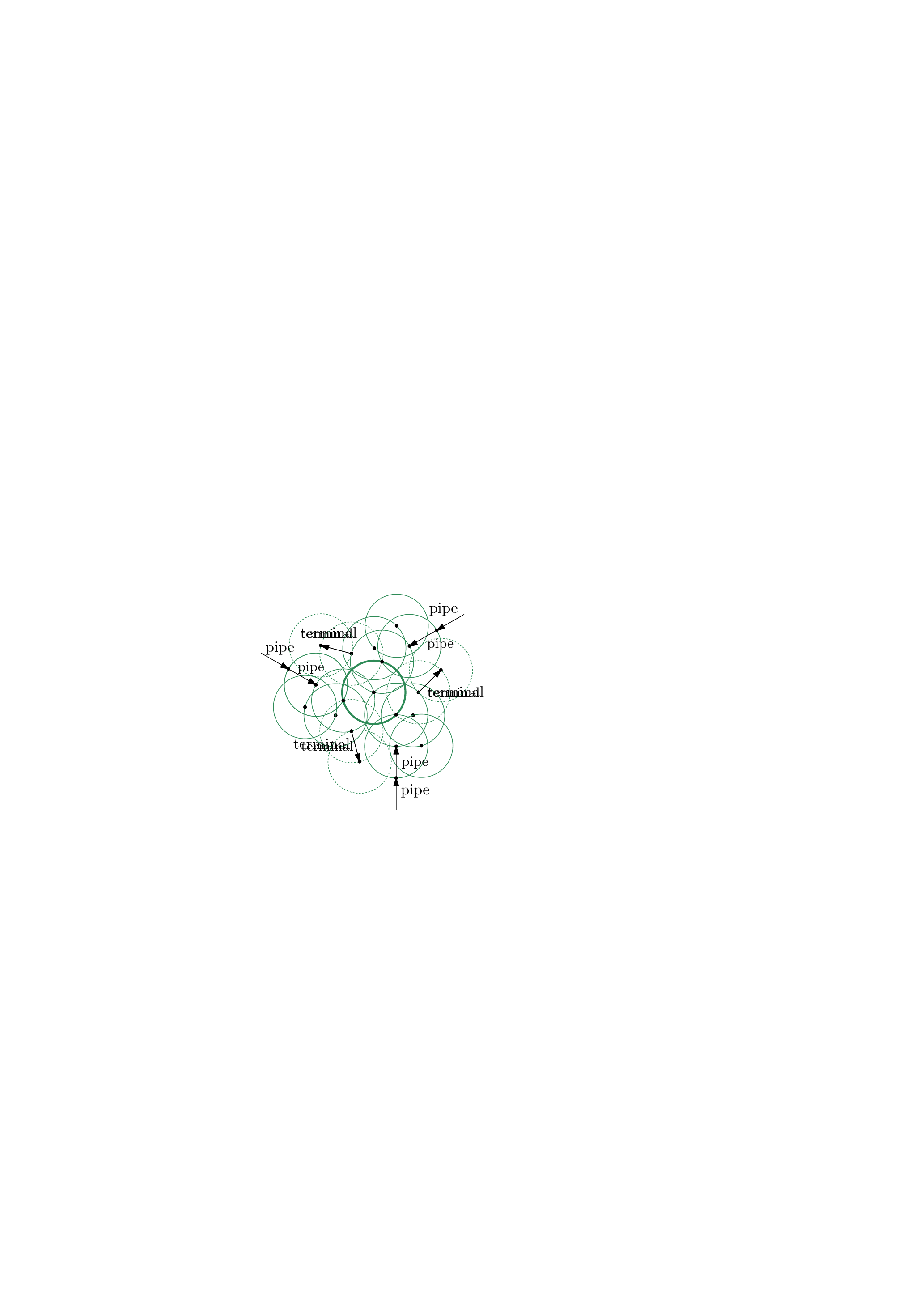}
 \label{appendix:fig:clause_gadget:2}
}
 \hfill
 \subfloat[Clause Gadget with last point of the pipes.]{
 \includegraphics[page=3, scale=1]{clause_x.pdf}
 \label{appendix:fig:clause_gadget:3}
 }
 \hfill
 \subfloat[Clause Gadget with terminals and last points of the pipes.]{
 \includegraphics[page=4, scale=1]{clause_x.pdf}
 \label{appendix:fig:clause_gadget:4}
 }
 \caption{Clause gadget.}\label{fig:clausegadgetappendix}
\end{figure}

\subsection{Gadgets of the Reduction}

\paragraph{Variable Gadget.} 
The variable gadget consists of an alternating sequence of two
building blocks: horizontal chains and \emph{literal
  readers}. A literal reader is a structure that allows us to
split the truth value of a variable into one part running towards a
clause and the part that continues the variable gadget, see
Figure~\ref{fig:buildingblocks:var}. The literal reader consists of
four turns, the first of which connects to a literal pipe and the
other three are dummy turns needed to lead the variable gadget back to
our grid. Note that some of the distances between anchor points in the
literal reader need to be slightly less than $\sqrt{2}$ in order to
reach a grid point at the end of the structure. 

In order to encode truth values we define the state in which the first
label of the first horizontal chain has active range $(0,\pi)$ as
\emph{true} and the state with active range $(\pi,2\pi)$ as
\emph{false}. 

\paragraph{Clause Gadget.}
The clause gadget consists of one inner and three outer labels, where
the anchor points of the outer labels split the outer circle of the
inner label into three equal parts of length $2\pi/3$, see
Figures~\ref{fig:clausegadgetappendix}
and~\ref{fig:clause_gadget:2}. Each outer label further connects
to an incoming literal pipe and a terminal. These two connector labels
are placed so that the outer circle of the outer label is split into
two ranges of length $3\pi/4$ and one range of length $\pi/2$.

The general idea behind the clause gadget is as follows. The inner
label obviously cannot have an active range larger than $2\pi/3$. Each
outer label is placed in such a way that if it carries the value
\false it has a soft conflict with the inner label in one of the three
possible active ranges of length $2\pi/3$.  Hence, if all three labels
transmit the value \false then every possible active range of the
inner label of length $2\pi/3$ is affected by a soft
conflict. Consequently, its active range can be at most $\pi/2$.
 
On the other hand, if at least one of the pipes transmits \true, the
inner label can be assigned an active range of maximum length
$2\pi/3$.

\begin{figure}[tb]
  \begin{minipage}[b]{.35\linewidth}
    \centering
    \includegraphics[page=2, width=.9\textwidth, trim=0 0 0
0,clip]{clause_x.pdf}
  \end{minipage}
  \hfill
  \begin{minipage}[b]{.60\linewidth}
    \centering
    \includegraphics[ width=\textwidth]{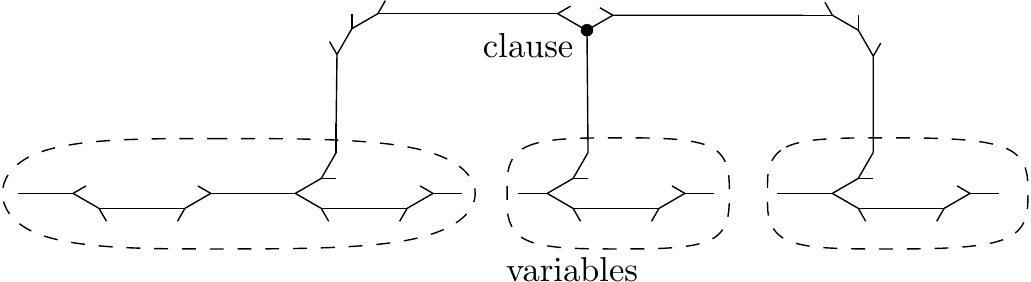}
  \end{minipage}
  \begin{minipage}[t]{.35\linewidth}
    \caption{Clause gadget with one inner and three outer
      labels.}
    \label{fig:clause_gadget:2}
  \end{minipage}
  \hfill
  \begin{minipage}[t]{.60\linewidth}
    \caption{Sketch of the gadget placement for the reduction.}
    \label{fig:maxtotal-proof}
  \end{minipage}
  \vspace{-3ex}
\end{figure}

\newcommand{\lemclausepitext}{%
  There must be a label in a clause or in one of the incoming
  pipes with an active range of length at most $\pi/2$ if and only if
  all three literals of that clause evaluate to \false.
}
\begin{lemma}
\label{lemma:clause:pi}
\lemclausepitext
\end{lemma}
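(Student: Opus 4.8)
The plan is to prove the equivalence by reading off, from the placement of the labels in the clause gadget, how the \true/\false state transmitted by each of the three incoming literal pipes constrains the active ranges of the gadget's own labels, and to treat the two implications separately, always working with an optimal labeling $\phi$ of the whole instance. Write $\ell_0$ for the inner label (anchored at $p_0$) and $\ell_1,\ell_2,\ell_3$ for the outer labels (anchored at $p_1,p_2,p_3$ on the outer circle of $\ell_0$). By Lemma~\ref{lem:discretize}, Lemma~\ref{lem:innerchain} (and the remark after it) and Lemma~\ref{lem:turnsplit}, in $\phi$ every pipe not ``broken'' by a short label is in a proper alternating state and hence transmits the value of its literal (an inverter having already accounted for negation). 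The single-angle hard conflicts of $\ell_0$ with $p_1,p_2,p_3$ split the outer circle of $\ell_0$ into three arcs $I_1,I_2,I_3$ of length $2\pi/3$, so $|A_\phi(\ell_0)|\le 2\pi/3$ and the active range of $\ell_0$ lies in one $I_m$; and the hard conflicts of each $\ell_j$ with the last point of its pipe and with its terminal point split its outer circle into two arcs of length $3\pi/4$ and one of length $\pi/2$, so $|A_\phi(\ell_j)|>\pi/2$ forces $\ell_j$ into one of its two length-$3\pi/4$ arcs. The geometric input I would isolate as a claim, to be verified from Lemma~\ref{lem:conflicts} and the concrete anchor positions (see Figure~\ref{fig:clause_gadget:2}), is: there is a bijection $j\mapsto I_j$ and, for each $j$, an angle $\gamma_j$ in the interior of $I_j$ lying in $C(\ell_0,\ell_j)$ that splits $I_j$ into sub-arcs of lengths $\pi/2$ and $\pi/6$, such that (a) if $\ell_j$ occupies the $3\pi/4$-arc dictated by a \false pipe (its ``\false arc'') then $\gamma_j$ lies in the active range of $\ell_j$, while if $\ell_j$ occupies the arc dictated by a \true pipe (its ``\true arc'') then $\ell_j$ is inactive at $\gamma_j$; and (b) every conflict angle in $C(\ell_0,\ell_k)\cap I_m$ with $k\ne m$ lies outside the \false arc of $\ell_k$, so a \false placement of $\ell_k$ obstructs $\ell_0$ only inside $I_k$.

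For the ``if'' direction I assume all three literals are \false. If in $\phi$ some clause label or some label of an incoming pipe already has active range of length at most $\pi/2$, we are done, so suppose not. Then every pipe is in a proper state transmitting \false and every $\ell_j$ has $|A_\phi(\ell_j)|>\pi/2$, so by the forcing above each $\ell_j$ sits in its \false arc and, by (a), is active at $\gamma_j\in I_j$. The active range of $\ell_0$ lies in some $I_m$; by validity it must avoid $\gamma_m$ (where $\ell_m$ is active), and by (b) it is unconstrained inside $I_m$ by the other $\ell_k$; since $\gamma_m$ splits $I_m$ into pieces of length $\pi/2$ and $\pi/6$, this gives $|A_\phi(\ell_0)|\le\pi/2$, contradicting the assumption since $\ell_0$ is a clause label. (The same estimate shows that placing some $\ell_j$ in its length-$\pi/2$ arc can never be optimal: keeping $\ell_j$ in its \false arc achieves $|A_\phi(\ell_j)|+|A_\phi(\ell_0)|=3\pi/4+\pi/2$, which exceeds the value $\pi/2+2\pi/3$ attainable otherwise.)

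For the ``only if'' direction I assume some literal $j_0$ is \true and exhibit a labeling that is optimal and has no short clause-or-pipe label: route every pipe in its proper alternating state (each pipe label of length $\pi$), place each $\ell_k$ in the $3\pi/4$-arc dictated by its pipe — in particular $\ell_{j_0}$ in its \true arc — and assign $\ell_0$ all of $I_{j_0}$. By (a), $\ell_{j_0}$ is inactive at $\gamma_{j_0}$, and by (b) every conflict angle in $C(\ell_0,\ell_k)\cap I_{j_0}$ with $k\ne j_0$ lies outside the active range of $\ell_k$, so $\ell_0$'s range of length $2\pi/3$ inside $I_{j_0}$ is conflict-free. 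Every label then has length at least $2\pi/3>\pi/2$; and since $|A_\phi(\ell_0)|\le 2\pi/3$, $|A_\phi(\ell_k)|\le 3\pi/4$ and pipe labels have length at most $\pi$, this labeling attains the gadget optimum, so every optimal labeling realizes it and has no clause-or-pipe label of length at most $\pi/2$.

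I expect the main obstacle to be verifying the geometric claim, especially part (b): using Lemma~\ref{lem:conflicts} and the $\pi/4$-offset convention for the outer circles one must determine exactly which of the (at most four, $\pi/2$-spaced) angles of each $C(\ell_0,\ell_j)$ fall into which of the arcs $I_1,I_2,I_3$, and then check that the $3\pi/4$-arc forced on an outer label by a \false-transmitting pipe contains the designated angle $\gamma_j\in I_j$ but none of the angles by which that label could obstruct a different arc. One also has to justify that a \false-transmitting pipe indeed forces its outer label into the \false arc once the latter has length $>\pi/2$ (a soft conflict between the last pipe label and the outer label excludes the \true arc). Given these geometric facts, the two logical arguments above are routine, and the resulting $\pi/6$ deficit in the objective is precisely a label of length at most $\pi/2$.
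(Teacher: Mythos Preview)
Your approach is essentially the same as the paper's: both argue that a \false outer label forces a soft conflict splitting one of the inner label's three $2\pi/3$ arcs into pieces of length $\pi/2$ and $\pi/6$, and that a \true literal frees one such arc entirely. You are considerably more explicit than the paper in isolating the needed geometric facts as claims~(a) and~(b) and in separating the two implications; the paper's proof is really just a two-paragraph sketch asserting the same conclusions.

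Two small points. First, in your description of the outer labels' arc structure you list only the pipe endpoint and the terminal as hard-conflict points, but two points give two arcs, not three; the third point is the inner anchor~$p_0$, which also lies on the outer circle of each~$\ell_j$. Second, your final sentence in the ``only if'' direction overreaches: you exhibit \emph{one} optimal labeling with no short label, which is exactly what is needed, but it does not follow that \emph{every} optimal labeling realizes this assignment. Neither issue affects the validity of the argument.
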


\begin{proof}
  The active range for the lower-right outer label that is equal to
  the state \false is $(3\pi/4, 3\pi/2)$. For the two other outer
  labels the active range corresponding to \false is rotated by $\pm
  2/3\pi$.  Note that the outer clause labels can have an active range
  of at most $3/4\pi$ and the inner clause label can at most have an
  active range of at most $2/3\pi$.  For every literal that is \false
  one of the possible active ranges of the inner clause label is split
  by a conflict into two parts of length $\pi/2$ and $\pi/6$. This
  conflict is either resolved by assigning an active range of length
  $\pi/2$ to the inner clause label or by propagating the conflict
  into the pipe or variable where it is eventually resolved by
  assigning some active range with length at most $\pi/2$.

  Otherwise, if at least one pipe transmits \true, the inner label of
  the clause can be active for $2\pi/3$ while the outer clause labels
  have an active range of length $3\pi/4$ and no chain or turn has a
  label that is visible for less than $2\pi/3$.
\qed
\end{proof}

\paragraph{Pipes.}
Pipes propagate truth values of variable gadgets to clause
gadgets. We use three different types of pipes, which we call
\emph{left arm}, \emph{middle arm}, and \emph{right arm}, depending on
where the pipe attaches to the clause.

One end of each pipe attaches to a variable at the open outgoing label
of a literal reader. Initially, the pipe leaves the variable gadget at
an angle of 30\degree. By using sequences of turns, we can route the
pipes at any angle that is an integer multiple of 30\degree. Thus we
can make sure that for a clause above the variables the left arm
enters the clause gadget at an angle of 150\degree, the middle arm at
an angle of 270\degree, and the right arm at an angle of 30\degree
with respect to the positive $x$-axis. For clauses below the variables
the pipes are mirrored.

In order to transmit the correct truth value into the clause we
first need to place the literal reader such that the turn point of the
first turn corresponds to an even position in the variable
chain. Next, for a positive literal we need a pipe of even length,
whereas for a negative literal the pipe must have odd length. Note
that we can always achieve the correct parity by making use of the
inverter gadgets. 

\paragraph{Gadget Placement.}

We place all variable gadgets on the same $y$-coordinate such that
each anchor point of variable labels (except for literal readers) lies
on integer $x$- and $y$-coordinates with respect to a grid of width
and height $\sqrt{2}$. Clause gadgets and pipes lie below and above
the variables and form three-legged ``combs''. The overall structure
of the gadget arrangement is sketched in
Figure~\ref{fig:maxtotal-proof}.

\begin{theorem}
\maxtotal is NP-complete.
\label{thm:maxtotal:np}
\end{theorem}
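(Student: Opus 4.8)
The plan is to prove the two halves of the statement separately: membership in NP, and NP-hardness by a reduction from planar 3-SAT that assembles the building blocks established above. For membership, I would use Lemma~\ref{lem:discretize}: there is an optimal labeling all of whose active-range borders are label events, and by Lemma~\ref{lem:conflicts} together with the hard-conflict characterization the total number of label events is $O(n^2)$. Hence a solution can be certified by recording, for each label, either the symbol ``inactive'' or an ordered pair of label events delimiting its regular active range; this certificate has polynomial size. Given it, one checks in polynomial time that every listed active range avoids the label's hard conflicts and that no two labels have overlapping active ranges inside their conflict set, and then compares the objective $\sum_\ell |A_\phi(\ell)|$ against the target. (As usual for geometric maximization with arcsin/arccos-valued angles, the decision version is phrased with the target expressed as the same kind of symbolic sum, so the comparison reduces to a per-gadget comparison of like expressions.) Thus \maxtotal is in NP.

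For hardness I would take a planar 3-SAT formula $\varphi$ together with a planar embedding of its variable–clause incidence graph and realize it on a grid of spacing $\sqrt2$, exactly as sketched in Figure~\ref{fig:maxtotal-proof}. Each variable becomes a horizontal variable gadget (an alternating sequence of chains and literal readers), each clause a clause gadget placed above or below the variable row, and each literal a pipe built from turns — plus an inverter where the required parity demands it — connecting the open outgoing label of a literal reader to the left, middle, or right arm of the clause gadget. A standard planar-embedding/routing argument shows this uses polynomially many unit-square labels with rational anchor coordinates, and since consecutive anchor points are always at distance in $[1,\sqrt2]$, Lemmas~\ref{lem:unitsquaresrepeat}, \ref{lem:innerchain}, \ref{lem:turnsplit} and \ref{lemma:clause:pi} (and the analogous inverter and literal-reader analyses) apply to every building block in the construction.

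For correctness, let $K$ be the sum over all gadgets of their per-gadget optimal objective values as computed above, using for each clause gadget the ``satisfied'' value $2\pi/3$ for its inner label. I would show $\varphi$ is satisfiable iff $M$ admits a consistent labeling of total length at least $K$. Forward direction: a satisfying assignment fixes the state of every variable gadget; propagating it through literal readers, pipes (with the chosen parities) and turns yields, by Lemmas~\ref{lem:innerchain} and \ref{lem:turnsplit}, a locally optimal state of every chain and turn, and because every clause has a \true literal, Lemma~\ref{lemma:clause:pi} lets each inner clause label reach $2\pi/3$, for a total of exactly $K$. Converse: $K$ is an upper bound, since each gadget's contribution is at most its per-gadget optimum and, by Lemma~\ref{lemma:clause:pi}, a clause all of whose literals are forced to \false loses at least $\pi/6$ against its nominal value; hence a labeling of value $\ge K$ is per-gadget optimal everywhere, so every chain is in one of its two alternating states and every clause sees a \true literal, and reading off the state of the first label of each variable's first chain gives a well-defined truth assignment satisfying all clauses.

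The main obstacle is this converse direction: one must rule out ``non-local'' labelings that beat the per-gadget bound somewhere while losing it elsewhere, or that somehow absorb an unsatisfied clause's $\pi/6$ deficiency for free. Lemmas~\ref{lem:innerchain}–\ref{lemma:clause:pi} are designed precisely to block this — they pin down that the per-gadget optimum is attained only by the two intended alternating states and that the clause deficiency is unavoidable without a \true literal — but the bookkeeping has to be carried out carefully at the interfaces where gadgets meet (terminals joining a chain to a turn, the four turns inside a literal reader, the points shared by a pipe and an outer clause label), and one must verify that the slightly-shortened distances needed to return the variable gadget and the pipes to the grid do not alter the combinatorial structure of the conflict sets on which these lemmas rely. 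Once these interface checks are done, summing the per-gadget bounds gives $K$ and the equivalence follows; together with membership in NP this proves the theorem. The identical construction, read under the \maxmin objective, simultaneously yields NP-hardness of \maxmin.
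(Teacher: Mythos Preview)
Your proposal is correct and follows essentially the same approach as the paper: NP membership via the discretization of Lemma~\ref{lem:discretize}, and NP-hardness via the planar 3-SAT reduction with the gadgets and threshold~$K$ already developed, appealing to Lemma~\ref{lemma:clause:pi} for the $\pi/6$ deficiency at unsatisfied clauses. You are in fact more explicit than the paper about the interface bookkeeping and about the real-number comparison issue for NP membership, but these are refinements of the same argument rather than a different route.
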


\begin{proof}
  For a given planar 3-SAT formula $\varphi$ we construct the \maxtotal
  instance as described above. For this instance we can compute the
  maximum possible sum $K$ of active ranges assuming that each clause
  is satisfiable. By Lemma~\ref{lemma:clause:pi} every unsatisfied
  clause forces one label to have an active range of only
  $\pi/2$. Thus we know that $\varphi$ is satisfiable if and only if
  the \maxtotal instance has a total active range sum of at least
  $K$. Constructing and placing the gadgets can be done in polynomial
  time and space.

  Due to Lemma~\ref{lem:discretize} we can discretize the \maxtotal
  problem. Thus we can construct an oracle that guesses an active
  range assignment, which we can verify in polynomial time. So
  \maxtotal is in $\mathcal{NP}$.
\qed
\end{proof}

We note that the same construction as for the NP-hardness of \maxtotal
can also be applied to prove NP-hardness of \maxmin. The maximally
achievable minimum
length of an active range for a satisfiable formula is $2\pi/3$,
whereas for an unsatisfiable formula the maximally
achievable minimum length is $\pi/2$ due
to Lemma~\ref{lemma:clause:pi}. This observation also yields that
\maxmin cannot be efficiently approximated within a factor of $3/4$.

\begin{corollary}
  \maxmin is NP-hard and it has no efficient approximation
  algorithm with an approximation factor larger than $3/4$ unless
  $\mathcal{P}=\mathcal{NP}$.
\end{corollary}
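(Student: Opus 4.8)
The plan is to reuse the planar 3-SAT reduction from the proof of Theorem~\ref{thm:maxtotal:np} verbatim and merely re-examine it under the \maxmin objective. The heart of the argument is a dichotomy: if the input formula $\varphi$ is satisfiable, then there is a consistent labeling in which \emph{every} label has an active range of length at least $2\pi/3$, while if $\varphi$ is unsatisfiable, then in \emph{every} consistent labeling some label has an active range of length at most $\pi/2$. Since $2\pi/3 > \pi/2$, an algorithm for \maxmin could decide planar 3-SAT, giving NP-hardness; and because $(\pi/2)/(2\pi/3) = 3/4$, the same gap yields the claimed inapproximability.

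For the positive direction I would start from a satisfying assignment and put every chain, inverter, turn, literal reader, and pipe into the corresponding one of its two optimal states. By Lemma~\ref{lem:innerchain} each inner-chain label then receives an active range of length $\pi$; by the analysis in the proof of Lemma~\ref{lem:turnsplit} each turn label receives at least $2\pi/3$ (namely lengths $\pi,2\pi/3,\pi,2\pi/3$ for $\ell_a,\ell_b,\ell_c,\ell_d$ in some order), and the same bound holds for terminal and inverter labels; finally, by Lemma~\ref{lemma:clause:pi}, since each clause has a satisfied literal, its inner label gets $2\pi/3$ and its three outer labels get $3\pi/4$. Hence $\min_\ell |A_\phi(\ell)| \ge 2\pi/3$, and since no inner clause label can ever exceed $2\pi/3$, this is exactly the \maxmin optimum in the satisfiable case.

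For the negative direction, if $\varphi$ is unsatisfiable, then in any consistent labeling at least one clause has all three incoming pipes carrying \false: here one uses, exactly as in the \maxtotal proof, that whenever every label of a chain/inverter/turn keeps an active range strictly longer than $\pi/2$ the gadget is forced into one of its two global states, so the transmitted truth values are consistent and must falsify some clause. Lemma~\ref{lemma:clause:pi} then supplies a label of active range at most $\pi/2$ in that clause or one of its pipes. Consequently the \maxmin optimum equals $2\pi/3$ for satisfiable and is at most $\pi/2$ for unsatisfiable instances, so \maxmin is NP-hard. Moreover, a polynomial-time algorithm guaranteeing a solution of value at least $\rho\cdot\mathrm{OPT}$ with $\rho > 3/4$ would, on a satisfiable instance, return value at least $\rho\cdot\tfrac{2\pi}{3} > \tfrac{\pi}{2}$, while on an unsatisfiable instance it can return at most $\mathrm{OPT}\le\tfrac{\pi}{2}$; comparing the output against the threshold $\pi/2$ decides planar 3-SAT in polynomial time, which is impossible unless $\mathcal{P}=\mathcal{NP}$.

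The main obstacle is the bookkeeping for the negative direction: I must argue that the two-state rigidity of chains, inverters, and turns --- originally phrased in terms of the \emph{sum} of active-range lengths for \maxtotal --- still holds when the objective is the \emph{minimum} length, i.e., that a consistent labeling keeping every gadget label above $\pi/2$ cannot sneak in a third state anywhere. This is the same case analysis as before (each gadget label has only two admissible active ranges and they must alternate along the gadget), but it has to be restated for the min objective rather than imported as a black box; once that is in place, the rest of the corollary is immediate.
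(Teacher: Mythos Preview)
Your proposal is correct and follows essentially the same approach as the paper: reuse the planar 3-SAT construction, invoke Lemma~\ref{lemma:clause:pi} to obtain the dichotomy that the \maxmin optimum is $2\pi/3$ for satisfiable instances and at most $\pi/2$ for unsatisfiable ones, and read off both NP-hardness and the $3/4$-inapproximability from this gap. Your write-up is actually more careful than the paper's one-paragraph remark, which asserts the dichotomy directly from Lemma~\ref{lemma:clause:pi} without explicitly revisiting the gadget-rigidity bookkeeping you flag.
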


\section{Approximation Algorithms}
In the previous section we have established that \maxtotal is
NP-complete. Unless $\mathcal{P} = \mathcal{NP}$ we cannot hope for an
efficient exact algorithm to solve the problem. In the following we
devise a $1/4$-approximation algorithm for \maxtotal and refine it to
an EPTAS. 
For both
algorithms we initially assume that labels are congruent unit-height
rectangles with constant width~$w\geq1$ and that the anchor points are the
lower-left corners of the labels.  Let $d$  be the length of the
label's diagonal, i.e., $d=\sqrt{w^2+1}$.
% \andreas{war: $w>1$. Sollte $w\geq1$ sein, oder?}
% Further, we abbreviate the length
% of the diagonal as $\delta = \sqrt{w^2 + 1}$.
% During a full rotation the labels sweep a disk of radius $\sqrt{w^2 +
%   1}$.

Before we describe the algorithms we state four important properties
that apply even to the more general labeling model, where anchor points
are arbitrary points within the label or on its boundary, and where
the ratio of the smallest and largest width and height, as well as the
aspect ratio are bounded by constants:
% \subsection{Properties}
% \label{sec:approx:prop}
% Four properties are important for our approximation algorithms; they
% hold for the restricted case of congruent rectangles as well as for
% the general model of Section~\ref{sec:model}:

\begin{enumerate}[(i)]
\setlength{\itemsep}{1pt}
  \setlength{\parskip}{0pt}
  \setlength{\parsep}{0pt}
\item the number of anchor points contained in a rectangle is
  proportional to its area, 
\item the number of conflicts a label can have with
other labels is bounded by a constant, 
\item any two conflicting labels produce only $O(1)$ conflict
regions, and finally, 
\item there is an optimal \maxtotal solution
where the borders of all active ranges are events.
\end{enumerate}

Properties~(i) and~(ii) are proved in
Lemmas~\ref{lemma:constant_label_in_rectangle}
and~\ref{lemma:constant:conflicts_per_label} using a
simple packing argument. 
Property (iii) follows from property~(ii) and
Lemma~\ref{lem:4regions}. 
Property (iv) follows immediately from
Lemma~\ref{lem:discretize}.

\begin{lemma}\label{lemma:constant_label_in_rectangle}
  For any rectangle $R$ with width~$W$ and height~$H$, the number of
  anchor points in the interior or on the boundary of $R$ is
  proportional to the area of $R$.
\end{lemma}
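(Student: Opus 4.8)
The plan is to show that the number of anchor points that can lie in $R$ is at most a constant times its area, by exploiting the fact that the input labels $\ell_1,\dots,\ell_n$ are pairwise disjoint and that each anchor point $p_i$ lies inside (or on the boundary of) its own label $\ell_i$, whose dimensions are bounded above and below by constants (in the unit-height case, each label has height $1$ and width $w$, so area exactly $w$). Because distinct anchor points belong to distinct disjoint labels, a packing argument converts an area bound on a region that ``mostly contains'' those labels into a count bound on the anchor points.

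First I would set up the bookkeeping. Let $w_{\max}$, $h_{\max}$ be upper bounds on label width and height, and $a_{\min}>0$ a lower bound on label area (all constants by hypothesis; in the concrete unit-height setting $w_{\max}=w$, $h_{\max}=1$, $a_{\min}=w$). Suppose $p_{i}\in R$ for $i$ ranging over some index set $S$; I want $|S|=O(WH)$. The subtlety is that even though $p_i\in R$, the label $\ell_i$ need not be contained in $R$ — it can stick out. So I would pass to an enlarged rectangle $R^{+}$ obtained from $R$ by growing it by $w_{\max}$ in each horizontal direction and $h_{\max}$ in each vertical direction; then every $\ell_i$ with $p_i\in R$ is entirely contained in $R^{+}$, since $\ell_i$ has width at most $w_{\max}$ and height at most $h_{\max}$ and one of its points lies in $R$. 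The area of $R^{+}$ is $(W+2w_{\max})(H+2h_{\max})$.

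Next I would invoke disjointness: the labels $\{\ell_i : i\in S\}$ are pairwise disjoint (they are a subset of the pairwise-disjoint input set $L$), they all lie inside $R^{+}$, and each has area at least $a_{\min}$. Hence $|S|\cdot a_{\min}\le \operatorname{area}(R^{+}) = (W+2w_{\max})(H+2h_{\max})$, so
\[
  |S|\ \le\ \frac{(W+2w_{\max})(H+2h_{\max})}{a_{\min}}.
\]
In the unit-height case with $w\ge 1$ this is $|S|\le (W+2w)(H+2)/w$. This is already ``proportional to the area'' in the asymptotic sense once we note $W,H$ are at least the relevant constants in the applications where the lemma is used (the rectangles arising in the approximation algorithm have side lengths bounded below), so the additive terms can be absorbed: $(W+2w_{\max})(H+2h_{\max}) = O(WH)$ whenever $W=\Omega(w_{\max})$ and $H=\Omega(h_{\max})$, giving $|S|=O(WH)$ with the constant depending only on the label-shape parameters. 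I would state the bound in the clean form $|S| \le c\cdot WH$ for a constant $c$ depending only on $w_{\max},h_{\max},a_{\min}$, valid for $W\ge w_{\max}$, $H\ge h_{\max}$, and remark that the general bound $|S|\le (W+2w_{\max})(H+2h_{\max})/a_{\min}$ holds unconditionally.

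The only real obstacle is the one handled above — that anchor points in $R$ can have labels protruding outside $R$ — and the fix (dilating $R$ by the constant label dimensions) is routine. Everything else is elementary area-additivity for disjoint sets. I would be slightly careful about boundary cases (anchor points on $\partial R$, labels overlapping only on their boundary) but these do not affect the area argument since boundaries have measure zero, or equivalently one can work with the closed enlarged rectangle and closed labels and note that disjoint interiors still give additive areas.
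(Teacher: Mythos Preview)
Your argument is correct and rests on the same packing principle as the paper's proof---disjointness of the input labels together with a lower bound on their area---but the bookkeeping differs. The paper splits the labels with anchor in $R$ into those fully contained in $R$ (at most $\lceil WH/a_{\min}\rceil$ by area) and those intersecting $\partial R$ (at most $\lceil 2W/w_{\min}\rceil + \lceil 2H/h_{\min}\rceil$ by a perimeter count), whereas you avoid the case distinction by dilating $R$ to $R^{+}$ so that every relevant label lies entirely inside and a single area bound suffices. Your route is slightly cleaner and yields the explicit estimate $(W+2w_{\max})(H+2h_{\max})/a_{\min}$, at the cost of invoking upper bounds $w_{\max},h_{\max}$ on the label dimensions; the paper's version uses only the lower bounds $w_{\min},h_{\min},a_{\min}$. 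Both collapse to $O(WH)$ once $W$ and $H$ are bounded below by the label dimensions, which is the regime used in the approximation algorithms.
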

\begin{proof}
  Recall that by assumption all labels in the initial labeled map $M$
  are visible. Let the smallest label height be $h_{\min}$, the smallest
  label width be $w_{\min}$ and the smallest label area be
  $a_{\min}$. There can be at most $\lceil 2W/w_{\min}\rceil + \lceil
  2H/h_{\min}\rceil$ independent labels intersecting the boundary of $R$
  such that their anchor points are contained in $R$. All remaining
  labels with an anchor point in $R$ must be completely contained in
  $R$, i.e., there can be at most $\lceil W\cdot H / a_{\min}\rceil$
  such labels. Hence, the number of anchor points in $R$ is bounded by
  a constant.
\qed
\end{proof}

\begin{lemma}
  Each label $\ell$ has conflicts with at most a constant number of
  other labels.
\label{lemma:constant:conflicts_per_label}
\end{lemma}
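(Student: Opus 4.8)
The plan is to bound the number of labels that can be in conflict with a fixed label $\ell$ by arguing that any label conflicting with $\ell$ must have its anchor point close to $p$, and then invoke the packing argument of Lemma~\ref{lemma:constant_label_in_rectangle}. The key geometric observation is that a conflict between $\ell$ and $\ell'$ can only occur if at some rotation angle the two (axis-parallel, hence parallel) rectangles intersect. Since each label has bounded width and bounded height, the diameter of each label is bounded by some constant $D$; therefore if $\ell$ and $\ell'$ ever intersect, the distance $d$ between their anchor points $p$ and $p'$ satisfies $d \le 2D$, because the anchor points lie inside their respective labels and the labels share at least one point. (More concretely, from Lemma~\ref{lem:conflicts} the conditions $d \ge h_t + h'_b$, etc., must fail for there to be no conflict, so a necessary condition for a conflict is $d \le \max\{h_t+h'_b,\,h_b+h'_t,\,w_r+w'_l,\,w_l+w'_r\}$, which is at most a constant under the bounded-size assumption.)

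First I would fix the constant $D$ as an upper bound on the diameter of any label, which exists because widths, heights, and aspect ratios are all bounded by constants; in the unit-height rectangle case with width $w$ we simply have $D = d = \sqrt{w^2+1}$. Next I would observe that every label $\ell'$ in conflict with $\ell$ has its anchor point $p'$ within distance $2D$ of $p$, i.e., $p'$ lies in the disk $B(p, 2D)$. Enclosing this disk in an axis-aligned square $R$ of side length $4D$, Lemma~\ref{lemma:constant_label_in_rectangle} tells us that the number of anchor points inside $R$ — hence the number of candidate conflict partners of $\ell$ — is bounded by a constant depending only on $D$ and the minimum label dimensions $w_{\min}, h_{\min}, a_{\min}$. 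Since each such anchor point carries exactly one label, $\ell$ has at most a constant number of conflicting labels.

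I expect the main (though still minor) obstacle to be pinning down the precise necessary distance condition for a conflict in the fully general model where anchor points may be arbitrary interior points and label dimensions vary: one must be slightly careful that ``$\ell$ and $\ell'$ intersect'' together with ``$p \in \ell$ and $p' \in \ell'$'' really does force $d$ below an explicit constant, using the triangle inequality $d = |pp'| \le \operatorname{diam}(\ell) + \operatorname{diam}(\ell') \le 2D$ once the rectangles share a point. Everything after that is a direct application of the packing lemma, so the proof is short.

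\begin{proof}
  Let $D$ be an upper bound on the diameter of any label; such a constant
  exists because all label widths, heights, and aspect ratios are bounded by
  constants (in the unit-height case, $D=d=\sqrt{w^2+1}$). Suppose $\ell$,
  anchored at $p$, has a conflict with $\ell'$, anchored at $p'$. Then there
  is a rotation angle at which $\ell$ and $\ell'$ intersect, so they share at
  least one point $x$. Since $p\in\ell$ and $p'\in\ell'$, the triangle
  inequality gives $|pp'|\le |px| + |xp'| \le \operatorname{diam}(\ell) +
  \operatorname{diam}(\ell') \le 2D$. Hence the anchor point of every label
  conflicting with $\ell$ lies in the disk of radius $2D$ centered at $p$,
  which is contained in an axis-aligned square $R$ of side length $4D$. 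By
  Lemma~\ref{lemma:constant_label_in_rectangle} the number of anchor points in
  $R$ is bounded by a constant, and each anchor point is the anchor of exactly
  one label. Therefore $\ell$ has conflicts with at most a constant number of
  other labels.
\qed
\end{proof}
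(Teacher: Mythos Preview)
Your proof is correct and follows essentially the same approach as the paper: both argue that a conflict forces the anchor points to lie within a distance bounded by twice the maximum label diameter (the paper phrases this as the outer circles having to intersect, you use the triangle inequality through a shared point), and then invoke Lemma~\ref{lemma:constant_label_in_rectangle} on the resulting constant-size region. Your explicit triangle-inequality step is a slightly more detailed justification of the same bound, but there is no substantive difference.
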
 

\begin{proof}
  For two labels $\ell$ and $\ell'$ to have a conflict their outer
  circles need to intersect and thus the maximum possible distance
  between their anchor points is bounded by twice the maximum diameter
  of all labels in $L$. By the assumption that the height ratio, width
  ratio, and aspect ratio of all labels in $L$ is bounded by a
  constant this diameter is constant. Hence we can define for each
  label $\ell$ a constant size area around its anchor point containing
  all relevant anchor points. By
  Lemma~\ref{lemma:constant_label_in_rectangle} this area contains
  only a constant number of anchor points.
\qed
\end{proof}

\subsection{A $1/4$-approximation for \maxtotal}\label{sec:.25-apprx}
The basis for our algorithm is the \emph{line stabbing} or \emph{shifting} technique by Hochbaum
and Maass~\cite{hm-avlsi-85}, which has been applied before to \emph{static} labeling
problems for (non-rotating) unit-height labels~\cite{aks-lpmis-98, ksw-plsl-99}. Consider a grid $G$ where each grid
cell
is a square with side length $2d$. We can address every grid cell by
its row and column index. Now we can partition $G$ into four
subsets by deleting every other row and every other column with either
even or odd parity. Within each of these subsets we have the property
that any two grid cells have a distance of at least $2d$. Thus no two
labels whose anchor points lie in different cells of the same subset
can have a conflict. We say that a grid cell $c$ \emph{covers} a label
$\ell$ if the anchor point of $\ell$ lies inside $c$. By
Lemma~\ref{lemma:constant_label_in_rectangle} only $O(1)$ labels are
covered by a single grid cell. Combining this with
Lemma~\ref{lemma:constant:conflicts_per_label} we see that the number
of conflicts of the labels covered by a single grid cell is
constant. This implies that the number of events in that cell
(cf. Lemma~\ref{lem:discretize}) is also constant.

The four different subsets of grid cells divide a \maxtotal instance
into four subinstances, each of which decomposes into independent grid
cells. If we solve all subsets optimally, at least one of the
solutions is a 1/4-approximation for the initial instance due to the
pigeon-hole principle.

Determining an optimal solution for the labels covered by a grid cell
$c$ works as follows. We compute, for the set of labels $L_c \subseteq
L$ covered by $c$, the set $E_c$ of label events. Due to
Lemma~\ref{lem:discretize} we know that there exists an optimal
solution where all borders of active ranges are label events.  Thus,
to compute an optimal active range assignment for the labels in $L_c$
we need to test all possible combinations of active ranges for all
labels $\ell \in L_c$. For a single cell this requires only constant
time.

% To find an optimal solution for all cells in one of the subsets the
% algorithm needs to consider at most a linear number of such cells,
% which yields a total running time of $O(n)$. 
% 
We can precompute the non-empty grid cells by simple
arithmetic operations on the coordinates of the anchor points and
store those cells in a binary search tree. Since we have $n$ anchor
points there are at most $n$ non-empty grid cells in the tree, and
each of the cells holds a list of the covered anchor points. Building
this data structure takes $O(n \log n)$ time and then optimally
solving the active range assignment problem in the non-empty cells takes $O(n)$ time. 

% by computing for each
% label which grid cell covers it; this can be done in $O(n)$ time.

% Using a standard range query data structure, we can obtain the labels
% of each non-empty grid cell in time $O(n\log n)$ in total.

\begin{theorem}
  There exists an $O(n \log n)$-time algorithm that yields a
  $1/4$-approx\-i\-ma\-tion of \maxtotal for congruent unit-height rectangles with
  their lower-left corners as anchor points. 
\label{thm:const:factor}
\end{theorem}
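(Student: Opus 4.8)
The plan is to make the grid argument sketched above precise and then to check the $1/4$ bound, global feasibility, and the $O(n\log n)$ running time. First I would fix the grid $G$ of axis-aligned cells of side length $2d$ and, choosing a parity in each coordinate direction, let $G_1,\dots,G_4$ be the four sub-grids obtained by keeping every other row and every other column; since each anchor point lies in exactly one cell of $G$, these sub-grids induce a partition $L=L_1\cup\dots\cup L_4$ of the labels. The geometric fact I would single out is that two distinct cells of the same sub-grid are separated by a gap of width $2d$, so points in different cells of one sub-grid are at distance at least $2d$; because the disk of radius $d$ around an anchor contains its label at every rotation angle, a label anchored in a cell $c$ can therefore have neither a regular conflict with a label anchored in another cell of the same sub-grid nor a hard conflict with a point lying in another such cell. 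Hence, restricted to a single sub-grid, the instance decomposes into independent cells.

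Next I would show how to solve one cell $c$ of a sub-grid $G_i$ optimally. By Lemma~\ref{lemma:constant_label_in_rectangle} it covers only $O(1)$ labels, and every point that can produce a hard conflict with one of them lies in $c$ or one of its $O(1)$ neighbouring cells, so only $O(1)$ points are relevant; by Lemmas~\ref{lem:4regions} and~\ref{lemma:constant:conflicts_per_label} this yields only $O(1)$ hard and soft conflict events affecting the labels of $c$, and by Lemma~\ref{lem:discretize} there is an optimal active-range assignment for these labels whose borders are among those events. I would therefore enumerate all $O(1)$ event-bounded assignments for the labels of $c$, discard those violating a hard or regular conflict, and keep one maximizing the total active-range length; this costs $O(1)$ per cell. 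Summing the per-cell optima over the cells of $G_i$ yields a labeling $\phi_i$ of $L_i$ that is optimal for the subinstance whose labels are $L_i$ and whose hard-conflict obstacles are all of $P$.

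Then I would assemble the approximation. Extending $\phi_i$ by leaving every label outside $L_i$ inactive gives a valid labeling of $M$: the labels of $L_i$ respect their hard conflicts with all points and their mutual regular conflicts, and all other labels are inactive and hence conflict-free. Conversely, restricting a global optimum $\phi^\ast$ to $L_i$ is feasible for the $i$-th subinstance, so the value of $\phi_i$ is at least $\mathrm{OPT}_i:=\sum_{\ell\in L_i}|A_{\phi^\ast}(\ell)|$. As the objective is additive over labels and the $L_i$ partition $L$, we get $\sum_{i=1}^4\mathrm{OPT}_i=\mathrm{OPT}$, whence the best of $\phi_1,\dots,\phi_4$ has value at least $\tfrac14\sum_{i=1}^4(\text{value of }\phi_i)\ge\tfrac14\sum_{i=1}^4\mathrm{OPT}_i=\tfrac14\mathrm{OPT}$, i.e.\ a $1/4$-approximation.

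For the running time I would compute each anchor's cell index by elementary arithmetic and store the at most $n$ non-empty cells, each with its list of covered anchor points, in a balanced binary search tree, which costs $O(n\log n)$. Processing one cell — an $O(1)$ number of tree lookups to locate its neighbours and gather the relevant points, then computing the $O(1)$ events and enumerating the $O(1)$ candidate assignments — takes $O(\log n)$ time, hence $O(n\log n)$ over all cells; the per-sub-grid summation and the final maximum over the four sub-grids add only $O(n)$. The step that needs the most care is the independence-of-cells argument: one must make sure it also accounts for hard conflicts (which reach only distance $d$ from a label's anchor and therefore stay inside the cell and its neighbours), and that the restriction/extension between a sub-grid subinstance and the global instance is arranged so that $\phi_i$ is globally feasible while $\mathrm{OPT}_i$ remains feasible for the subinstance.
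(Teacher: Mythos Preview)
Your proposal is correct and follows essentially the same approach as the paper: the same $2d$-grid, the same four sub-grids obtained by dropping alternate rows and columns, independent per-cell brute force using the constant bound on labels and events, and the pigeon-hole argument for the $1/4$ factor. You are in fact more explicit than the paper on two points the paper glosses over---that hard conflicts with points in the \emph{deleted} neighbouring cells must still be respected when solving a cell, and that this necessitates neighbour lookups in the search tree---but these refinements do not change the overall $O(n\log n)$ bound or the structure of the argument.
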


\subsection{An Efficient Polynomial-Time Approximation Scheme for \maxtotal} 
\label{sec:approx:eptas}
We extend the technique for the
$1/4$-approximation to achieve a $(1 -
\varepsilon)$-approx\-i\-ma\-tion. Let again $G$ be a grid whose grid cells
are squares of side length $2d$. For any integer $k$ we can remove
every $k$-th row and every $k$-th column of the grid cells, starting at
two offsets $i$ and $j$ ($0 \le i,j \le k-1$). This yields collections
of meta cells of side length $(k-1) \cdot 2d$ that are pairwise
separated by a distance of at least $2d$ and thus independent. In
total, we obtain $k^2$ such collections of meta cells. 

For a given $\varepsilon \in (0,1)$ we set $k = \lceil 2/\varepsilon
\rceil$. Let $c$ be a meta cell for the given $k$ and let again $L_c$
be the
set of labels covered by $c$, and $E_c$ the set of label events for
$L_c$. Then, by Lemmas~\ref{lemma:constant_label_in_rectangle}
and~\ref{lemma:constant:conflicts_per_label}, both $|L_c|$ and $|E_c|$
are $O(1/\varepsilon^2)$. 
% \andreas{$k = 2/\varepsilon$; genauer erklären?}
% 
% 
Since we need to test all possible active ranges for all
labels in $L_c$, it takes
$O(2^{O(1/\varepsilon^2 \log 1/\varepsilon^2)})$ time
to determine an optimal solution for the meta cell $c$.
% \andreas{changed}

For a given collection of disjoint meta cells  we 
determine (as in Section~\ref{sec:.25-apprx}) all $O(n)$ non-empty meta cells and store them in a binary
search tree such that each cell holds a list of its covered anchor
points. This requires again $O(n \log n)$ time. 
So for one collection of meta cells the time complexity for finding an
optimal solution is  $O(n2^{O(1/\varepsilon^2 \log 1/\varepsilon^2)} +
n\log n)$. There are $k^2$ such collections and, by the pigeon hole principle,
the optimal solution for at least one of them is a
$(1-\varepsilon)$-approximation of the original instance. This yields the
following theorem.
% \andreas{paragraph changed}

% The number of cells we need to consider is linear and we again need a
% range query data structure to locate all points within a meta cell. So
% for one collection of meta cells the time complexity for finding an
% optimal solution is  $O(n2^{O(1/\varepsilon^2 \log 1/\varepsilon^2)} +
% n\log n)$. There are $k^2$ such collections and the optimal solution
% for at least one of them is a $(1- \varepsilon)$-approximation. This
% yields the following theorem. 

\begin{theorem}
  There exists an 
% efficient polynomial-time approximation scheme 
  EPTAS
   that
  computes a $(1-\varepsilon)$-approximation of \maxtotal for congruent
  unit-height rectangles with their lower-left corners as anchor
  points. Its time complexity is $O((n 2^{O(1/\varepsilon^2 \log
    1/\varepsilon^2)} + n\log n)/\varepsilon^2)$.
\end{theorem}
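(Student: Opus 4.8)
The plan is to instantiate the shifting technique of Hochbaum and Maass on the meta-cell decomposition described above and to combine a per-cell exhaustive search with an averaging argument over all shifts. Fix $k=\lceil 2/\varepsilon\rceil$ and recall that, for each of the $k^2$ offset pairs $(i,j)$ with $0\le i,j\le k-1$, deleting the rows and columns of $G$ congruent to $i$ and $j$ modulo $k$ produces a collection of meta cells of side length $(k-1)\cdot 2d$ that are pairwise separated by at least $2d$. Since two labels can only be in (soft) conflict when their outer circles overlap, i.e., when their anchor points lie at distance less than $2d$, the subinstances induced by distinct meta cells of one collection are mutually independent. This lets me solve each collection by solving its meta cells separately and taking the union of the partial labelings.

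Next I would bound the cost of solving a single meta cell $c$. By Lemmas~\ref{lemma:constant_label_in_rectangle} and~\ref{lemma:constant:conflicts_per_label} the number of covered labels $|L_c|$ and the number of label events $|E_c|$ are both $O(1/\varepsilon^2)$; here hard conflict events are treated as precomputed per-label data, so that a label retains its full (soft and hard) event set even though some nearby points or labels may lie in deleted strips. By the discretization Lemma~\ref{lem:discretize} an optimal solution can be assumed to use only active ranges whose two borders are label events, so each label has $O(|E_c|^2)=O(1/\varepsilon^4)$ candidate ranges. Testing all combinations over the $O(1/\varepsilon^2)$ labels of $c$ therefore takes $(1/\varepsilon^4)^{O(1/\varepsilon^2)}=2^{O(1/\varepsilon^2\log 1/\varepsilon)}$ time and yields an optimal labeling of the subinstance of $c$.

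The key step is the approximation guarantee, obtained by an averaging argument over the $k^2$ shifts. Let $\phi^\ast$ be a global optimum with value $\mathrm{OPT}=\sum_\ell|A_{\phi^\ast}(\ell)|$, and for a fixed offset $(i,j)$ let $\mathrm{OPT}_{i,j}$ be the total active-range length that $\phi^\ast$ assigns to labels whose anchor points survive the deletion. Restricting $\phi^\ast$ to these surviving labels is a valid labeling of the corresponding subinstance: it still respects all hard conflicts (the assigned ranges are unchanged) and all soft conflicts (deactivating labels can only remove conflicts, never create them), so the algorithm's optimal per-collection solution has value at least $\mathrm{OPT}_{i,j}$. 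Every label survives in exactly $(k-1)^2$ of the $k^2$ offset pairs, hence $\sum_{i,j}\mathrm{OPT}_{i,j}=(k-1)^2\,\mathrm{OPT}$, and by the pigeonhole principle some offset attains $\mathrm{OPT}_{i,j}\ge\big((k-1)/k\big)^2\,\mathrm{OPT}\ge(1-2/k)\,\mathrm{OPT}\ge(1-\varepsilon)\,\mathrm{OPT}$. Running the per-collection algorithm for all $k^2$ shifts and returning the best solution therefore gives a $(1-\varepsilon)$-approximation.

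Finally I would aggregate the running time. For one collection, the $O(n)$ non-empty meta cells are found and stored in a binary search tree in $O(n\log n)$ time, after which each is solved in $2^{O(1/\varepsilon^2\log 1/\varepsilon)}$ time, giving $O(n\,2^{O(1/\varepsilon^2\log 1/\varepsilon)}+n\log n)$ per collection. Multiplying by the $k^2=O(1/\varepsilon^2)$ collections yields the claimed bound $O\big((n\,2^{O(1/\varepsilon^2\log 1/\varepsilon^2)}+n\log n)/\varepsilon^2\big)$. The main obstacle is the correctness of the averaging step, specifically verifying that the restriction of $\phi^\ast$ to surviving labels is feasible despite deleted labels and points lying close to retained meta cells; once hard conflicts are handled as fixed per-label constraints, this reduces to the observation that deactivating labels never introduces new conflicts.
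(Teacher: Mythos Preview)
Your proposal is correct and follows essentially the same approach as the paper: the shifting technique with $k=\lceil 2/\varepsilon\rceil$, per-cell brute force over label events using Lemma~\ref{lem:discretize}, and a pigeonhole/averaging argument over the $k^2$ shifts. You supply more detail than the paper does---in particular the explicit computation that each label survives in $(k-1)^2$ of the $k^2$ shifts (yielding $((k-1)/k)^2\ge 1-2/k\ge 1-\varepsilon$) and the observation that hard conflicts must be carried as per-label constraints so that the union of per-cell optima is globally valid---but these are elaborations of the same argument rather than a different route.
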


We note that this EPTAS basically relies on properties (i)--(iv) and
that there is nothing special about congruent rectangles anchored at
their lower-left corners. Hence we can generalize the algorithm to the
more general labeling model, in which the ratio of the label heights,
the ratio of the label widths, and the aspect ratios of all labels are
bounded by constants. Furthermore, the anchor points are not required
to be label corners; rather they can be any point on the boundary or
in the interior of the labels. Finally, we can even ignore the
distinction between hard and soft conflicts, i.e., allow that anchor
points of non-active labels are occluded. Properties (i)--(iv)
still hold in this general model. The only change in the EPTAS is to
set the width and height of the grid cells to twice the maximum
diameter of all labels in $L$.

\begin{corollary}
  There exists an % efficient polynomial-time approximation scheme 
  EPTAS that
  computes a $(1-\varepsilon)$-approximation of \maxtotal in the
  general labeling model with rectangular labels of bounded height
  ratio, width ratio, and aspect ratio, where the anchor point of each label is an
  arbitrary point in that label. The time
complexity of the EPTAS is $O((n 2^{O(1/\varepsilon^2 \log 1/\varepsilon^2)} +
  n\log n)/\varepsilon^2)$.
\end{corollary}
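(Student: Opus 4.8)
The plan is to reduce the corollary to the EPTAS already established for the congruent unit-height case by verifying that the four structural properties (i)--(iv) continue to hold verbatim in the general model and that the shifting argument goes through unchanged once the grid granularity is adjusted. Concretely, I would first fix the grid cell side length to $\Delta := 2D$, where $D$ is the maximum diameter over all labels in $L$; this $D$ is a constant by the bounded height-ratio, width-ratio, and aspect-ratio assumptions. As in Lemma~\ref{lemma:constant:conflicts_per_label}, two labels can only be in conflict if their outer circles intersect, so conflicting anchor points lie within distance $D$ of one another; hence cells of two different meta cells in the same collection, being separated by at least $\Delta = 2D$, host no mutual conflicts and the $k^2$ collections decompose the instance into independent meta cells exactly as before.

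Next I would re-examine properties (i)--(iv) in the general model. Property (i) is Lemma~\ref{lemma:constant_label_in_rectangle}, whose proof only uses the existence of a smallest label width, height, and area, all of which remain positive constants here; so the number of anchor points in a cell or meta cell is still $O(1)$ (resp.\ $O(1/\varepsilon^2)$). Property (ii) follows from (i) together with the distance-$D$ observation above, just as in Lemma~\ref{lemma:constant:conflicts_per_label}, since the constant-size neighborhood of each anchor point contains only $O(1)$ other anchors. Property (iii) is Lemma~\ref{lem:4regions}, which was already stated and proved in the general model where the anchor point may be any point of the label, so it carries over directly; combined with (ii) the total number of conflict regions per label is $O(1)$. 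Property (iv) is Lemma~\ref{lem:discretize}; its proof is a purely combinatorial exchange argument on active-range borders that never uses the shape of labels or the position of anchor points, so it too holds here --- note that dropping the hard/soft distinction only removes constraints and does not affect the discretization. With (i)--(iv) in hand, the per-meta-cell brute force over all $O(1/\varepsilon^2)$ labels and their $O(1/\varepsilon^2)$ candidate event borders runs in $2^{O(1/\varepsilon^2 \log 1/\varepsilon^2)}$ time, and summing over the $O(n)$ non-empty meta cells (found and stored in a binary search tree in $O(n\log n)$ time) and over the $k^2 = O(1/\varepsilon^2)$ collections gives the claimed bound $O((n\, 2^{O(1/\varepsilon^2\log 1/\varepsilon^2)} + n\log n)/\varepsilon^2)$, with the best collection being a $(1-\varepsilon)$-approximation by the pigeon-hole principle.

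The main thing to be careful about --- and the only place where the generalization is not completely mechanical --- is the step linking the bounded ratios to a genuinely \emph{constant} neighborhood size. If the labels could be arbitrarily large relative to the smallest label, the packing argument in Lemma~\ref{lemma:constant_label_in_rectangle} would still bound the count in a fixed rectangle, but the relevant rectangle (of side $\Theta(D)$) would not have constant area measured in units of $a_{\min}$, and $|L_c|$ could blow up. The bounded height ratio, width ratio, and aspect ratio are exactly what prevents this: they force $D / \sqrt{a_{\min}}$ to be bounded by a constant, so a meta cell of side $(k-1)\cdot 2D$ contains $O(k^2) = O(1/\varepsilon^2)$ anchor points. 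I would spell out this one inequality and then simply invoke the proof of the preceding theorem verbatim, observing that no other part of the argument --- neither the shifting decomposition, nor the discretization, nor the brute-force solution of a meta cell --- depends on labels being congruent, axis-corner-anchored, or on the hard/soft conflict distinction.
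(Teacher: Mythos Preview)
Your proposal is correct and follows essentially the same approach as the paper: verify that properties (i)--(iv) persist in the general model and then rerun the shifting-based EPTAS with the grid cell side length set to (twice) the maximum label diameter. You spell out more carefully than the paper why the bounded ratios keep $D/\sqrt{a_{\min}}$ constant, but this is exactly the implicit content of the paper's one-line justification that ``the only change in the EPTAS is to set the width and height of the grid cells to twice the maximum diameter of all labels in $L$.''
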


\section{Conclusion}
\label{sec:conclusion}
We have introduced a new model for consistent labeling of rotating
maps and proved NP-hardness of the active range maximization
problem. We could, however, show that there is an EPTAS for the
\maxtotal problem that works for rectangular labels with arbitrary
anchor points and bounded height ratio, width ratio, and aspect
ratio. An interesting open question and an important challenge in
practice is to combine map rotation with zooming and panning and study
the arising algorithmic labeling problems.

% \newpage
{ \small
\setlength{\itemsep}{0pt}
%\raggedright
\bibliographystyle{abuser}
 \bibliography{abbrv,label-rot,maplab}

}

\end{document}